\def\BibTeX{{\rm B\kern-.05em{\sc i\kern-.025em b}\kern-.08em
    T\kern-.1667em\lower.7ex\hbox{E}\kern-.125emX}}
\newtheorem{problem}{Problem}
\newtheorem{corollary}{Corollary}
\algnewcommand{\Initialize}[1]{%
  \State \textbf{Initialize:}
  \Statex \hspace*{\algorithmicindent}\parbox[t]{.8\linewidth}{\raggedright #1}
}
\DeclareMathOperator*\argmax{arg \, max}		
\DeclareMathOperator*\minimize{min.}
\DeclareMathOperator*\subject{s.t.}	
\newcommand{\field}[1]{\mathbb{#1}}
\newcommand{\set}[1]{\mathcal{#1}}
\newcommand{\R}{{\field{R}}}
\newcommand{\C}{{\field{C}}}
\newcommand{\N}{{\field{N}}}
\newcommand{\Ex}{{\field{E}}}
\newcommand{\ma}[1]{\boldsymbol{\mathbf{#1}}}
\newcommand{\ve}[1]{\boldsymbol{\mathbf{#1}}}
\newcommand{\mI}{\ma{I}}
\newcommand{\vs}{\ve{s}}
\newcommand{\vm}{\ve{m}}
\newcommand{\vp}{\ve{p}}
\newcommand{\vo}{\ve{o}}
\newcommand{\vq}{\ve{q}}
\newcommand{\vx}{\ve{x}}
\newcommand{\vy}{\ve{y}}
\newcommand{\vv}{\ve{v}}
\newcommand{\vu}{\ve{u}}
\newcommand{\vz}{\ve{z}}
\newcommand{\vmu}{\ve{\mu}}
\newcommand{\msig}{\ma{\Sigma}}
\newcommand{\Ns}{\set{N}}
\newcommand{\Xs}{\set{X}}
\newcommand{\Ss}{\set{S}}
\newcommand{\Ks}{\set{K}}
\newcommand{\Ls}{\set{L}}
\newcommand{\Ds}{\set{D}}
\newcommand{\KL}{\set{D}_{\Ks\Ls}}
\newcommand{\Ys}{\set{Y}}
\newcommand{\operator}[1]{\mathrm{#1}}
\newcommand{\thr}{(\operator{th})}
\newcommand{\app}{(\operator{APP})}
\newcommand{\ib}{(\operator{IB})}
\newcommand{\vib}{(\operator{VIB})}
\begin{document}

\title{AdaSem: Adaptive Goal-Oriented Semantic Communications for End-to-End Camera Relocalization}

\author{
\IEEEauthorblockN{Qi Liao\IEEEauthorrefmark{1}, Tze-Yang Tung\IEEEauthorrefmark{2}}
\IEEEauthorblockA{ 
\IEEEauthorrefmark{1}Nokia Bell Labs, Stuttgart, Germany\\
\IEEEauthorrefmark{2}Nokia Bell Labs, Murray Hill, United States\\
Email: \url{{qi.liao, tze-yang.tung}@nokia-bell-labs.com}}
}

\maketitle
\def\thefootnote{}\footnotetext{The authors contributed equally to this work. This work has been accepted for IEEE International Conference on Computer Communications (INFOCOM) 2024.}
\begin{abstract}
Recently, deep autoencoders have gained traction as a powerful method for implementing goal-oriented semantic communications systems. The idea is to train a mapping from the source domain directly to channel symbols, and vice versa. However, prior studies often focused on rate-distortion tradeoff and transmission delay, at the cost of increasing end-to-end complexity and thus latency. Moreover, the datasets used are often not reflective of real-world environments, and the results were not validated against real-world baseline systems, leading to an unfair comparison. In this paper, we study the problem of remote camera pose estimation and propose AdaSem, an adaptive semantic communications approach that optimizes the tradeoff between inference accuracy and end-to-end latency. We develop an adaptive semantic codec model, which encodes the source data into a dynamic number of symbols, based on the latent space distribution and the channel state feedback. We utilize a lightweight model for both transmitter and receiver to ensure comparable complexity to the baseline implemented in a real-world system. Extensive experiments on real-environment data show the effectiveness of our approach. When compared to a real implementation of a client-server camera relocalization service, AdaSem outperforms the baseline by reducing the end-to-end delay and estimation error by over $75\%$ and $63\%$, respectively.
\end{abstract}

\begin{IEEEkeywords}
Semantic communications, goal-oriented communications, camera relocalization, variational information bottleneck, dynamic neural networks
\end{IEEEkeywords}
\section{Introduction}
The breakthrough in \ac{AI} has promoted the rapid development of intelligent data-driven solutions in many real-world applications, such as immersive experiences, including \ac{AR}, \ac{VR}, and metaverse. These applications are facilitated by a large number of connected mobile and \ac{IoT} devices, generating an enormous amount of data at the network edge. 
To reduce the network traffic load and serve latency-sensitive applications, there is an urgent need to push the \ac{AI} frontiers to the network edge and unleash the potential of big data, i.e., the so-called \emph{edge intelligence}. With ever-increasing computing power, edge devices can either process data locally and send the extracted features to the server for decision-making (enabling hybrid cloud-edge computing solutions), or, directly make local decisions (enabling edge-native solutions). We believe that the former is more pragmatic for a specific type of applications -- those requiring frequent user interactions, but have limited on-device computing or battery capacity, such as \ac{AR}/\ac{VR} devices that share a common metaverse experience.

In parallel, an emerging communications paradigm, called \emph{semantic communications}, has been proposed to alleviate the communication burden, while improving the end-to-end application performance.
The concept was first introduced in Shannon and Weaver's seminal work \cite{shannon1964}, where Weaver identified three levels of problem within the broad subject of communications: 1) the \emph{technical problem} seeks to recover the transmitted symbols as accurately as possible; 2) the \emph{semantic problem}, seeks to recover the meaning of the transmitted symbols; 3) the \emph{effectiveness problem} seeks to affect the conduct of the recipient in the desired way.
In contrast to the \emph{technical} design principles in our current communications systems, semantic and effectiveness communications serve the downstream application directly by extracting and transmitting only the information that is relevant to the common goal of the transmitter and the receiver, leading to both significant reduction in data traffic and improved end-to-end performance. Many works have since described principles and proposed frameworks of a semantic interface between network and application layers \cite{goldreich2012theory,qin2021semantic,luo2022semantic,chaccour2022less}.

Combining edge intelligence with semantic communications, over the past few years, many studies have proposed to solve them jointly through the framework of \ac{JSCC} using techniques found in deep learning. \ac{JSCC} seeks to combine the application layer encoding of information (source coding) with the forward error correction mechanism found in conventional communication systems (channel coding) into a single code, such that the end-to-end application performance is maximized. Using deep learning, \ac{JSCC} can be realized as an autoencoder that maps the source information to channel symbols, and vice versa. Based on simulated results, deep learning-based \ac{JSCC} (DeepJSCC) has shown advantages over the standard communications protocols in terms of both lower distortion and communication overhead in many use cases, such as video transmission \cite{tung2022deepwive}, image retrieval \cite{jankowski2020wireless}, person identification \cite{10066513}, and machine translation \cite{farsad2018deep}. These pioneering works validated DeepJSCC solutions for various tasks, but the model may not generalize to unseen data or dynamic channel conditions. Therefore, many recent works have focused on adaptive solutions, providing architectures that can adapt to time-varying channel conditions and leveraging an information bottleneck to optimize the extraction of semantic features \cite{shao2022learning,dai2022nonlinear,wang2022wireless}. However, these works have mostly achieved good communication rate-distortion tradeoff, at the cost of increasing end-to-end complexity, which ultimately leads to latency and limits their applicability to latency-sensitive applications. Moreover, the baseline solutions and data used to measure the performance of their proposed solutions are often not reflective of real-world conditions, leading to unfair comparisons against the baseline.

In this work, we study the problem of remote camera pose estimation, a.k.a., the camera relocalization problem under the scope of adaptive goal-oriented semantic communications. The goal of camera relocalization is to estimate the 6-\ac{DoF} camera pose, including 3D position and 3D orientation, based on the source information consisting of visual data and motion sensor measurements. Such service is an essential function to realize many immersive visualization-based applications, such as \ac{VR}/\ac{AR} gaming and immersive navigation. 
We propose AdaSem, an adaptive semantic communications approach using deep variational autoencoders. The adaptive codec encodes the source data into a dynamic number of symbols based on channel state feedback and goal-oriented source information, and decodes from the received symbols to the corresponding 6-\ac{DoF} camera pose directly. The main contributions of this paper are summarized as follows:
\begin{itemize}
    \item \emph{End-to-end complexity-awareness}. Unlike the other works that consider only the tradeoff between transmission delay and inference distortion, we aim to develop a light-weight solution that fulfills the end-to-end remote service latency requirement, i.e., including the encoder inference time, transmission delay, and the decoder inference time, while minimizing the inference error. 
    \item \emph{Adaptive codec}. While prior works have assumed a channel model (e.g., \ac{AWGN} channel) with simulated channel conditions, we derive a variational information bottleneck with channel awareness to estimate the communication rate and inference performance tradeoff using real-world channel measurements. We show that our solution is more robust to varying channel conditions in practical testing.
    \item \emph{Realistic performance comparison}.
    To provide a fair comparison to AdaSem, we implemented a real prototype of a client-server camera relocalization service with an Android App and an edge server as the baseline, as shown in Figure~\ref{fig:convComm}. Extensive experiments using real-environment data (including both application and radio data) show that AdaSem outperforms the baseline by reducing the end-to-end delay and estimation error by over $75\%$ and $63\%$, respectively.
\end{itemize}
To the best of the authors' knowledge, this is the first work to develop adaptive semantic communications solutions for remote camera relocalization service, with a performance comparison against a practically implemented baseline, and dataset that reflects real-world usage. 
The rest of the paper is organized as follows. Section \ref{sec:ReWo} goes over the related works. In Section \ref{sec:Problem}, we introduce the system model and formulate the semantic camera relocalization problem. In Section \ref{sec:solution_AdaSem}, we propose the AdaSem solution. The experimental results are shown in Section \ref{sec:Experiment} and conclusions are drawn in Section \ref{sec:Conclusion}.

\begin{figure*}[t]
     \centering
     \begin{subfigure}{.9\textwidth}
         \centering
         \includegraphics[width=0.9\textwidth]{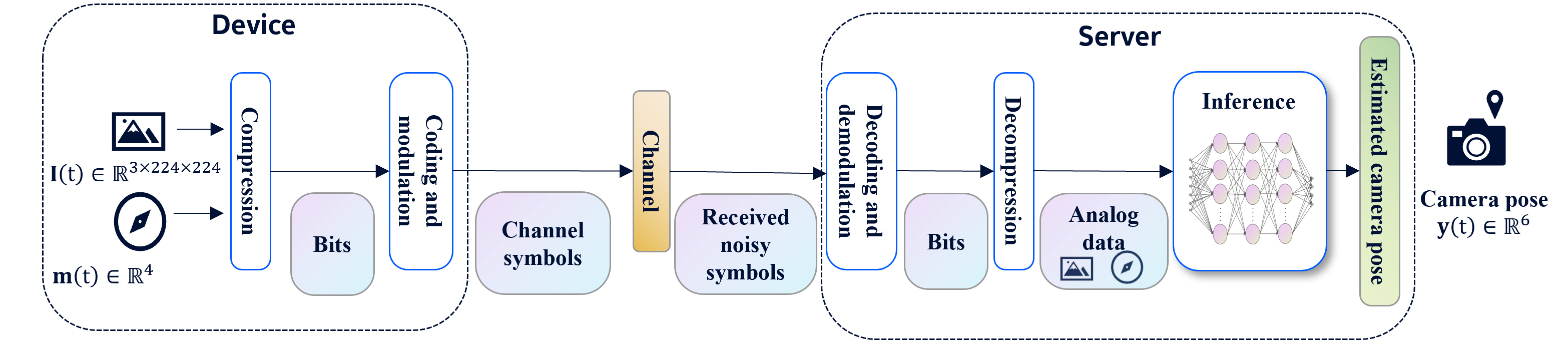}
         \caption{Conventional communications scheme for remote end-to-end camera relocalization.}
         \label{fig:convComm}
     \end{subfigure}
     \begin{subfigure}{.9\textwidth}
         \centering
         \includegraphics[width=0.9\textwidth]{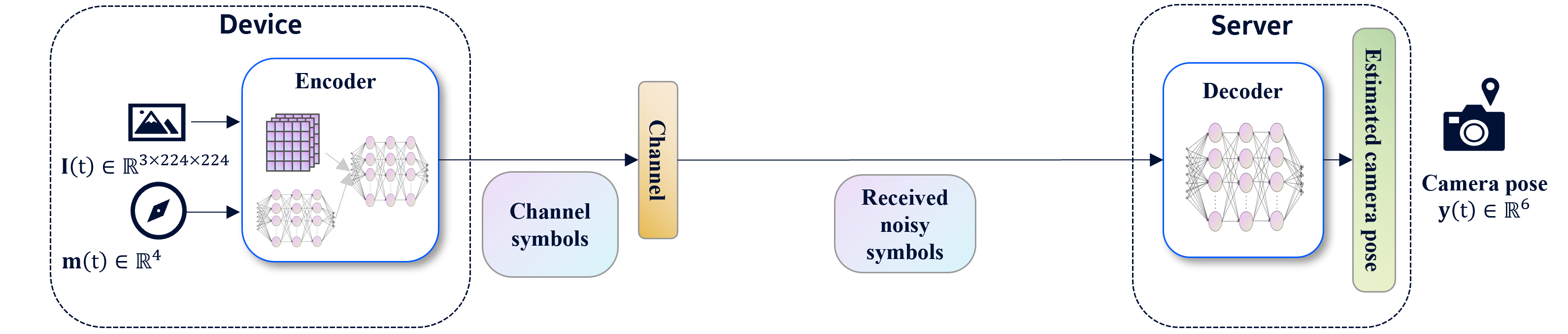}
         \caption{Goal-oriented semantic communications scheme for remote end-to-end camera relocalization.}
         \label{fig:effComm}
     \end{subfigure}
        \caption{Comparison of conventional and semantic communications schemes.}
        \vspace{-3ex}
\end{figure*}

\section{Related Works}\label{sec:ReWo}

{\bf Camera relocalization solutions.}
Conventional camera relocalization techniques include \ac{VO} and \ac{SLAM} \cite{hartley2003multiple}. For example, ORB-SLAM3 \cite{campos2021orb} detects matched features in sequential image frames and estimates the ego-motion within a \ac{RANSAC} scheme. However, their performance is sensitive to fast movement and suffers from slow relocalization when the tracking is lost. Thus, novel camera relocalization methods leverage deep learning techniques, e.g., PoseNet \cite{PoseNet} and its derivatives \cite{GeoPoseNet, LSTMPoseNet,VLocNet++}, have been proposed to learn a direct mapping from the image to the $6$-\ac{DoF} camera pose using \ac{CNN}-based architectures for known scenes. Since a solely visual-based approach is sensitive to camera movement and illumination conditions, side information from motion sensors such as accelerometer and gyroscope, is included in the multi-input \ac{DNN} in \cite{hu2021real}. Besides the above-mentioned end-to-end approaches, hybrid approaches, e.g., PixLoc \cite{sarlin2021back}, have also been proposed where the authors argue that for better generality to new viewpoints or scenes, the \ac{DNN} should focus on learning invariant visual features, while leaving the pose estimation to classical geometric-based algorithms. Note that in the context of remote camera relocalization, the aforementioned works treat the relocalization as a standalone optimization problem, assuming the availability of accurate source data. In this paper, we consider the relocalization service and the communications system 
as a whole, aiming at improving the end-to-end performance in terms of both inference accuracy and end-to-end latency.

{\bf Adaptive semantic communications approaches.} 
In \cite{dai2022nonlinear}, the authors presented \ac{NTSCC}, which utilizes a learned entropy model of the intermediate encoder features to determine the rate of communication needed to achieve a certain distortion, and adapt the transmission rate accordingly. For performance evaluation, the proposed approach was tested on the image transmission task over public datasets such as CIFAR-10 \cite{krizhevsky2009learning}. A similar idea was applied in \cite{wang2022wireless} for the wireless deep video semantic transmission problem, tested over the Vimeo-90K dataset \cite{xue2019video}. A follow-up work \cite{dai2023toward} further extended the adaptive semantic communications concept by introducing a plug-in \ac{CSI} modulation module into the pretrained codec modules, such that the proposed online \ac{NTSCC} can adapt to different source data distributions and channel conditions. In \cite{shao2022learning}, the authors also addressed the adaptive semantic communications problem, but using the deep variational information bottleneck approach inspired by \cite{alemi2017deep}. The performance evaluation was done using the MNIST \cite{deng2012mnist} and CIFAR-10 \cite{krizhevsky2009learning} datasets. However, although these works demonstrated improved transmission bandwidth cost compared to separate source and channel coding approaches, the model complexity can be very high, leading to greater end-to-end latency due to inference cost. 
Moreover, the publicly available datasets provide only the application data, e.g., images or videos and their corresponding labels, without the corresponding radio measurements in the propagation environment. Thus, existing works typically make assumptions regarding the communications system and channel model, which can lead to unrealistic comparisons.


\section{System Model and Problem Statement}\label{sec:Problem}
We consider the remote camera relocalization problem supported by goal-oriented semantic communications. The conventional end-to-end system performs source coding, channel coding, and adaptive modulation as shown in Figure \ref{fig:convComm}, transforming the information across different formats: source data, bits, and symbols.
In contrast, the semantic communications scheme directly encodes the input data into symbols on the device, transmits the encoded symbols, and decodes from the received corrupted symbols into the corresponding camera pose in the remote server, as shown in Figure \ref{fig:effComm}. Table \ref{Tab:notation} summarizes the notations used in this work.

{\bf Source input data.} At each time frame $t\in\N_+$, the following data is collected in the device: 1) image array captured by visual sensor (e.g., camera) $\mI(t)\in\R^{3\times W\times H}$, where $W$ and $H$ are the width and height of the image, respectively; 2) motion sensor measurements $\vm(t)\in\R^n$ extracted from the \ac{IMU} sensors including accelerometer, gyroscope, and magnetometer; 3) received channel feedback $c(t)\in\R$, e.g., \ac{RSRP} measurements. Hereafter we omit the time index $t$ for brevity.

%
{\bf Adaptive encoder.} The adaptive encoder implemented in the device is defined as
\begin{equation}
f_e: \Xs\times\Ks\to \Ss: (\vx, k) \mapsto \vs,
\label{eqn:f_enc}
\end{equation}
where $k$ is the number of symbols given by the adaptation policy, $\Ks$ is a finite set of positive integers, $\vx:=(\mI, \vm)\in\Xs$ is the source data, and $\vs\in\Ss:=\cup_{k\in\Ks}\C^k$ is a vector of complex-valued symbols. We define $\Ss$ as the union of $\C^k, \forall k\in\Ks$, i.e., the encoder allows varying dimensions of $\vs$.

{\bf Adaptation policy.} The number of symbols $k$ is dynamically adapted to the source data $\vx$ and side information, e.g., the channel feedback $c\in\R$. We define the adaptation policy as
\vspace{-1ex}
\begin{equation}
\pi:\Xs\times\R\to\Ks:(\vx, c)\mapsto k.
\label{eqn:pi}
\vspace{-1ex}
\end{equation}
Thus, with \eqref{eqn:f_enc} and \eqref{eqn:pi}, the mapped symbols can be written as $\vs(\vx, c)=f_e(\vx, \pi(\vx, c))$.

{\bf Channel.}  The encoded symbols $\vs\in\Ss$ are transmitted via a noisy channel 
\vspace{-1ex}
\begin{equation}
h:\Ss\to\Ss:\vs\mapsto \hat{\vs},
\label{eqn:channel}
\vspace{-1ex}
\end{equation}
and their noisy versions $\hat{\vs}=h(\vs)$ are received in the server. For example, in \ac{OFDM}, we can assign $k$ \acp{RE} to transmit the $k$ symbols, where each \ac{RE} consists of $1$ subcarrier in the frequency domain and $1$ \ac{OFDM} symbol in the time domain. 

{\bf Adaptive decoder.} An adaptive decoder is implemented in the server to infer the camera pose from the received symbols, given by
\begin{equation}
f_d:\Ss\to \R^6:\hat{\vs}\mapsto\hat{\vy},
\label{eqn:dec}
\end{equation}
where $\hat{\vy}=[\hat{\vp}, \hat{\vo}]$ is the inferred camera pose, composed of the inferred 3D position $\hat{\vp}\in\R^3$ and 3D orientation $\hat{\vo}\in\R^3$. 

\begin{table}[t]
\caption{Table of Notations}
\centering
\label{Tab:notation}
\begin{tabular}{|l|l|}
\hline
Symbol & Meaning \\ \hline
   $\vx$   &     Source data in $\Xs$, composed of image and \ac{IMU} data\\\hline   
   $\vs$   &     Transmitted symbols in $\Ss$  \\\hline
   $\hat{\vs}$   &     Received corrupted symbols in $\Ss$  \\\hline
   $\vy$   &     Camera pose in $\Ys$, composed of position $\vp$ and orientation $\vo$ \\\hline
   $\vp$   &     3D position of the camera \\\hline
   $\vo$   &     3D orientation of the camera \\\hline
   $f_e$   &    Encoder implemented in  device \\\hline
   $\pi$   &    Adaptation policy implemented in device\\\hline
    $k$   &    Dimension of encoded symbols in $\Ks$\\\hline
   $f_d$   &    Decoder implemented in server\\\hline
   $h$   &    Wireless channel \\\hline
\end{tabular}
\end{table}

{\bf Service-oriented goal}. The objective is to minimize the camera relocalization inference distortion $d(\vy, \hat{\vy})$ where $d$ is a defined distance measure, subjected to the \emph{end-to-end latency} constraint  
\begin{equation}
\tau(\pi, f_e, f_d, h)\leq \tau^{\thr}.
\label{eqn:e2e_delay}
\end{equation} 
The end-to-end latency depends jointly on the inference time of the policy, the inference time of the encoder, the transmission time over the channel, and the inference time of the decoder.

Thus, the problem of jointly designing an adaptive encoder $f_e$, the adaptation policy $\pi$, and the adaptive decoder $f_d$, is defined in Problem \ref{prob:EC_Camloc}.
\begin{problem}
\label{prob:EC_Camloc} 
\begin{align}
\minimize_{f_e, f_d, \pi} & ~\Ex_{\vy\sim p(\vy|\vx)}\Ex_{\hat{\vy}\sim p(\hat{\vy}|\vx)}\left[d(\vy, \hat{\vy})\right] \\
 \subject ~ &  
\eqref{eqn:f_enc}, \eqref{eqn:pi}, \eqref{eqn:channel}, \eqref{eqn:dec}, \eqref{eqn:e2e_delay}
\end{align}
\end{problem}

The main challenge of solving Problem \ref{prob:EC_Camloc} is twofold. Firstly, deriving a closed-form expression of Problem \ref{prob:EC_Camloc} with its multiple constraints, is extremely challenging. We, therefore, parameterize the functions $(f_e, f_d)$ using \acp{DNN} and optimize Problem \ref{prob:EC_Camloc} using  a dynamic neural network design \cite{han2021dynamic}. The second challenge is to keep the dynamic architecture lightweight to maintain low end-to-end complexity.

    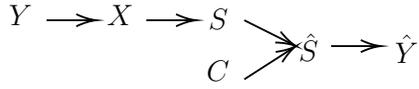
\begin{figure}
        \centering
        \resizebox{0.65\columnwidth}{!}{%
       \tikzset{every picture/.style={line width=0.75pt}} 

\begin{tikzpicture}[x=0.75pt,y=0.75pt,yscale=-1,xscale=1]

\draw    (65.82,34.2) -- (90.82,34.2) ;
\draw [shift={(92.82,34.2)}, rotate = 180] [color={rgb, 255:red, 0; green, 0; blue, 0 }  ][line width=0.75]    (10.93,-3.29) .. controls (6.95,-1.4) and (3.31,-0.3) .. (0,0) .. controls (3.31,0.3) and (6.95,1.4) .. (10.93,3.29)   ;
\draw    (119.82,34.2) -- (144.82,34.2) ;
\draw [shift={(146.82,34.2)}, rotate = 180] [color={rgb, 255:red, 0; green, 0; blue, 0 }  ][line width=0.75]    (10.93,-3.29) .. controls (6.95,-1.4) and (3.31,-0.3) .. (0,0) .. controls (3.31,0.3) and (6.95,1.4) .. (10.93,3.29)   ;
\draw    (172.82,34.2) -- (199.03,47.3) ;
\draw [shift={(200.82,48.2)}, rotate = 206.57] [color={rgb, 255:red, 0; green, 0; blue, 0 }  ][line width=0.75]    (10.93,-3.29) .. controls (6.95,-1.4) and (3.31,-0.3) .. (0,0) .. controls (3.31,0.3) and (6.95,1.4) .. (10.93,3.29)   ;
\draw    (219.82,48.2) -- (244.82,48.2) ;
\draw [shift={(246.82,48.2)}, rotate = 180] [color={rgb, 255:red, 0; green, 0; blue, 0 }  ][line width=0.75]    (10.93,-3.29) .. controls (6.95,-1.4) and (3.31,-0.3) .. (0,0) .. controls (3.31,0.3) and (6.95,1.4) .. (10.93,3.29)   ;
\draw    (172.82,66.2) -- (199.14,49.28) ;
\draw [shift={(200.82,48.2)}, rotate = 147.26] [color={rgb, 255:red, 0; green, 0; blue, 0 }  ][line width=0.75]    (10.93,-3.29) .. controls (6.95,-1.4) and (3.31,-0.3) .. (0,0) .. controls (3.31,0.3) and (6.95,1.4) .. (10.93,3.29)   ;

\draw (44,24.4) node [anchor=north west][inner sep=0.75pt]  [font=\large]  {$Y$};
\draw (97,24.4) node [anchor=north west][inner sep=0.75pt]  [font=\large]  {$X$};
\draw (152,26.4) node [anchor=north west][inner sep=0.75pt]  [font=\large]  {$S$};
\draw (200.82,39.6) node [anchor=north west][inner sep=0.75pt]  [font=\large]  {$\hat{S}$};
\draw (253,40.4) node [anchor=north west][inner sep=0.75pt]  [font=\large]  {$\hat{Y}$};
\draw (151,54.4) node [anchor=north west][inner sep=0.75pt]  [font=\large]  {$C$};

\end{tikzpicture}
        }
				\caption{Assumed Markov chain of Problem \ref{prob:EC_Camloc}.}
    \label{fig:MarkovAssumption}
    \end{figure}
    
\section{Proposed AdaSem Solution}\label{sec:solution_AdaSem}
To solve Problem \ref{prob:EC_Camloc}, we propose AdaSem, which adaptively optimizes the number of the transmitted symbols for each source sample based on the channel-aware \ac{VIB}. 
The challenge is to find a good tradeoff between the communication rate and the distortion, by exploiting the channel feedback, while maintaining a low-complexity encoder and decoder. The proposed AdaSem solution consists of the following three components: channel-aware \ac{VIB}, adaptation policy, and a lightweight dynamic neural network architecture.

\subsection{Channel-Aware variational information bottleneck}
In \cite{alemi2017deep}, the deep variational information bottleneck method was developed as the variational version of the information bottleneck of Tishby \cite{tishby2000information}. We use a similar principle for our channel-aware semantic communications problem, where the bottleneck helps us make the tradeoff between the communication rate and the inference distortion.

We assume a Markov chain as shown in Figure \ref{fig:MarkovAssumption}, where $Y, X, S, \hat{S}, \hat{Y}, C$ denote the random variables of the inference target $\vy$, input data $\vx$, encoded symbol $\vs$, corrupted symbol $\hat{\vs}$, inferred output $\hat{\vy}$, and channel feedback $c$, respectively. Namely, we assume $p(\hat{S}|X, Y, C) = p(\hat{S}|X, C)$, i.e., the received symbols depends directly on $X$ and $C$ only, but not on $Y$. 

With the assumed Markov chain, the information bottleneck can be written as 
\begin{equation}
\Ls^{\ib} := \underbrace{-I(\hat{S}; Y)}_{\mbox{Distortion}} + \beta \underbrace{I(\hat{S}; X, C)}_{\mbox{Rate}},
\label{eqn:IB}
\end{equation}
where the first term is the negative mutual information between the received symbols and the inference target, seen as the \emph{distortion}, while the second term is the mutual information between the received symbols and the observations of both source data and channel feedback, seen as the \emph{rate}.
    
Problem \ref{prob:EC_Camloc} is related to minimizing \eqref{eqn:IB}, because the objective is also to minimize the distortion, with an appropriate $\beta$ such that the number of transmitted symbols $k$ (reflecting the communication rate) leads to a feasible end-to-end delay, as both the inference time and transmission time are monotonically non-decreasing functions of $k$. Thus, following the similar method as in \cite{alemi2017deep}, we derive the following corollary for the approximation of Equation \eqref{eqn:IB}.
    
\begin{corollary}
Assume the Markov chain in Figure \ref{fig:MarkovAssumption}, let $q_{\phi}(\vy|\hat{\vs})$ (the decoder) be a variational approximation to  $p(\vy|\hat{\vs})$. Let $p_{\theta}(\hat{\vs}|\vx,c)$ (the encoder) be a variational approximation to $p(\hat{\vs}|\vx, c)$, and let $r(\hat{\vs})$ be a variational approximation to the marginal $p(\hat{\vs})=\int_{\vx,c} p(\hat{\vs}|\vx, c)p(\vx)p(c)d\vx \, dc $, an upper bound of \eqref{eqn:IB} is given by
\begin{align}
\Ls^{\vib}:= & \Ex_{p(\vx, \vy, c)}\Big(-\Ex_{p_{\theta}(\hat{\vs}|\vx, c)}\left[\log q_{\phi}(\vy|\hat{\vs})\right] \nonumber \\
 & + \beta \KL\big(p_{\theta}(\hat{\vs}|\vx, c)||r(\hat{\vs})\big)\Big) \geq \Ls^{\ib}  
 \label{eqn:vib_upperbound}
\end{align}
where $\KL(p||q)$ denote the \ac{KL} divergence between the two distributions $p$ and $q$.
\label{corol:VIB_compute}
\end{corollary}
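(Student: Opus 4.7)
The plan is to upper-bound each of the two terms in $\Ls^{\ib}$ from \eqref{eqn:IB} separately, introducing the variational approximations $q_{\phi}$ and $r$ via the non-negativity of the Kullback--Leibler divergence, and invoking the Markov chain of Figure~\ref{fig:MarkovAssumption} to factor the joint distribution through $(X,C)$. This mirrors the derivation in Alemi et al.~\cite{alemi2017deep}, adapted to the additional conditioning on the channel feedback $C$.

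For the distortion term, I would rewrite $I(\hat{S};Y) = \Ex_{p(\hat{\vs},\vy)}[\log p(\vy|\hat{\vs})] + H(Y)$, where $H(Y)$ is constant in the optimization variables $(\theta,\phi)$. The inequality $\KL(p(\vy|\hat{\vs})||q_{\phi}(\vy|\hat{\vs})) \geq 0$ for every fixed $\hat{\vs}$, integrated against $p(\hat{\vs})$, gives $\Ex_{p(\hat{\vs},\vy)}[\log p(\vy|\hat{\vs})] \geq \Ex_{p(\hat{\vs},\vy)}[\log q_{\phi}(\vy|\hat{\vs})]$. I would then invoke the Markov assumption $p(\hat{\vs}|\vx,\vy,c) = p(\hat{\vs}|\vx,c)$ to rewrite $p(\hat{\vs},\vy) = \int p_{\theta}(\hat{\vs}|\vx,c)\,p(\vx,\vy,c)\,d\vx\,dc$, which yields
\begin{equation*}
-I(\hat{S};Y) \leq -\Ex_{p(\vx,\vy,c)}\Ex_{p_{\theta}(\hat{\vs}|\vx,c)}[\log q_{\phi}(\vy|\hat{\vs})] - H(Y).
\end{equation*}

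For the rate term, I would expand $I(\hat{S};X,C) = \Ex_{p(\vx,c,\hat{\vs})}[\log p(\hat{\vs}|\vx,c)] - \Ex_{p(\hat{\vs})}[\log p(\hat{\vs})]$ and apply $\KL(p(\hat{\vs})||r(\hat{\vs})) \geq 0$, which gives $-\Ex_{p(\hat{\vs})}[\log p(\hat{\vs})] \leq -\Ex_{p(\hat{\vs})}[\log r(\hat{\vs})]$. Identifying the encoder--channel composition with $p_{\theta}(\hat{\vs}|\vx,c)$ collapses the bound into $I(\hat{S};X,C) \leq \Ex_{p(\vx,c)}[\KL(p_{\theta}(\hat{\vs}|\vx,c)||r(\hat{\vs}))]$. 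Weighting this by $\beta>0$ and adding it to the distortion bound gives $\Ls^{\ib} \leq \Ls^{\vib} - H(Y)$; since $H(Y)$ does not depend on $(\theta,\phi)$, it is absorbed into an additive constant, yielding the stated inequality $\Ls^{\vib}\geq\Ls^{\ib}$ in the usual variational-IB sense.

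The main subtle point, rather than a hard obstacle, is the careful use of the Markov chain to replace $p(\hat{\vs}|\vx,\vy,c)$ by $p(\hat{\vs}|\vx,c)$, which is precisely what lets the outer expectation in $\Ls^{\vib}$ be sampled from $(\vx,\vy,c)$ training triples paired with encoder--channel draws $\hat{\vs}\sim p_{\theta}(\hat{\vs}|\vx,c)$; without this conditional independence the decomposition of $p(\hat{\vs},\vy)$ would introduce a dependence on the unknown $p(\hat{\vs}|\vx,\vy,c)$ that cannot be estimated from data. A secondary bookkeeping issue is the $H(Y)$ term: for the continuous 6-DoF camera pose it is a differential entropy and may in principle be negative, so strictly speaking the bound is an equality only up to this parameter-independent constant, which is the standard caveat silently absorbed in VIB-style bounds.
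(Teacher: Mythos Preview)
Your proposal is correct and follows essentially the same route as the paper's own proof: both bound the distortion term via $\KL(p(\vy|\hat{\vs})\|q_{\phi}(\vy|\hat{\vs}))\geq 0$ and factor $p(\vy,\hat{\vs})$ through $(\vx,c)$ using the Markov assumption, and both bound the rate term via $\KL(p(\hat{\vs})\|r(\hat{\vs}))\geq 0$ to obtain the expected KL form. Your explicit tracking of the $H(Y)$ constant and the remark that it is a differential entropy (hence not necessarily non-negative) is a welcome clarification that the paper handles more implicitly by simply declaring $H(Y)$ optimization-irrelevant.
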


\begin{proof}
The proof is given in Appendix \ref{sec:coro_proof}.
\end{proof}

Note that $p(\vx, \vy, c)$ can be approximated using the empirical data distribution $p(\vx, \vy, c)=(1/N) \sum_{i=1}^N \delta_{\vx_i}\delta_{\vy_i}\delta_{c_i}$, where $\delta$ is a delta distribution. We can then approximate $\Ls^{\vib}$ as
\begin{align}
\Ls^{\vib}  \approx & \frac{1}{N}  \sum_{i=1}^N \Big(-\Ex_{\boldsymbol{\epsilon}\sim p(\boldsymbol{\epsilon})} \big[\log q_{\phi}(\vy_i|g(\vx_i, c_i, \boldsymbol{\epsilon}))\big] \nonumber \\
& + \beta \KL(p_{\theta}(\hat{\vs}|\vx_i, c_i)|| r(\hat{\vs}))\Big),
\label{eqn:empirical_vib}
\end{align}
and we use a multivariate Gaussian distribution $\boldsymbol{\epsilon}$ to reparameterize the marginal distribution $p(\hat{\vs})$, such that $p_{\theta}(\hat{\vs}|\vx, c) = \Ns\left(\hat{\vs}|g^{\vmu}(\vx, c), g^{\msig}(\vx, c)\right)$. Thus, we simply need to find a deterministic function $g(\vx_i, c_i)$ over $(X, C)$ that predicts the means $g^{\vmu}(\vx, c)$ and covariance $g^{\msig}(\vx, c)$ of the multivariate Gaussian distribution.
    
\subsection{AdaSem model}
With \eqref{eqn:empirical_vib}, the \ac{VIB} in \eqref{eqn:vib_upperbound} is tractable and can be computed when training an autoencoder model. Therefore, we propose a model as shown in Figure \ref{fig:AdaSem_Autoencoder}, which has the following four components:
\begin{itemize}
\item {\bf Variational encoder} $g = f_{\hat{s}}\circ f_z$ estimates the distribution of the corrupted symbols and provides the communication rate estimation to the adaptation policy for optimizing the symbol dimension $k$.
\item {\bf Adaptation policy} $\pi$ chooses the optimal $k^\ast \in \mathcal{K}$ and provides it to the encoder. 
\item {\bf Encoder} $f_e=f_s\circ f_z$ encodes $(\vx, c)$ into varying lengths of symbols $\vs$ to transmit, based on the optimized $k$.
\item {\bf Decoder} $f_d$ receives noisy symbols $\hat{\vs}$ and infer the camera pose $\vy$ in server.
\end{itemize}
Note that $f_e, g, \pi$ are implemented in the device, while $f_d$ is implemented in the server. Thus, it is critical to keep the models in the device lightweight. To this end, we propose to utilize a known lightweight pretrained model MobileNetV3 \cite{howard2019searching} for the image feature extractor. The basic architecture of $f_e$ is shown in Figure \ref{fig:BasicEncoder}. The image data is fed into a \ac{CNN}, initialized using the lower layer weights of a pretrained MobileNetV3 \cite{howard2019searching}. The reason is the following: first, the useful features of an image for inferring geometric relations are usually the keypoints, such as edges and corners \cite{campos2021orb}, which are assumed to be extracted by the lower layers of an object detector. Second, by only using a few layers from a pretrained object detector, we can reduce the complexity of the encoder and thus the inference time. The \ac{IMU} data is processed with a \ac{MLP} consisting of fully connected layers. The above-mentioned two branches are concatenated and fed into $f_s$ and $f_{\hat{s}}$ for channel input encoding and communication rate adaptation, respectively. 
Both $f_{\hat{s}}$ and $f_s$ utilize lightweight architectures, made of one fully connected layer only.
The details of the adaptation method will be discussed in the following subsection.

\begin{figure}
        \centering
        \resizebox{.9\columnwidth}{!}{%
       \input{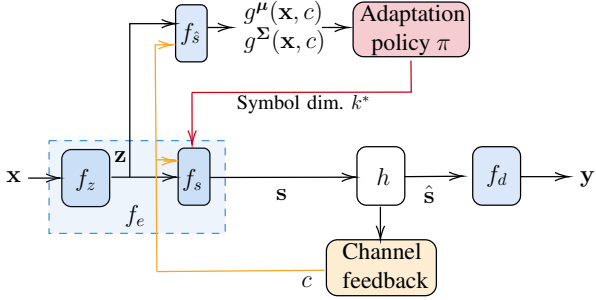}
        }
				\caption{Proposed AdaSem model consisting of 1) encoder $f_e=f_s\circ f_z$, 2) variational encoder $g = f_{\hat{s}}\circ f_z$, 3) adaptation policy $\pi$, and 4) decoder $f_d$.}
    \label{fig:AdaSem_Autoencoder}
    \end{figure}




For a better-customized service experience, we define an application-oriented distortion measure $d^{\app}(\vy, \hat{\vy})$ as a weighted sum of the position error and the angular distance.    
\begin{equation}
d^{\app}(\vy, \hat{\vy}) := (1-\alpha)\|\vp-\hat{\vp}\|_2 - \alpha |\vq\cdot \hat{\vq}'|,
\label{eqn:pose_loss}
\end{equation}
where $\hat{\vp}$ is the inferred position and $\hat{\vq}\in\R^4$ is the \emph{quaternion}, which is an equivalent form of the inferred orientation $\hat{\vo}$. Because the quaternion is subjected to unit length, we denote $\hat{\vq}':= \hat{\vq}/\|\hat{\vq}\|$ as the normalized quaternion.
By tuning the weight $\alpha$, we can prioritize either the position or the orientation accuracy, depending on the individual application's requirement.
Note that instead of the Euclidean distance used in PoseNet \cite{PoseNet}, we leverage \emph{quaternion} to better represent the angular distance. In Appendix \ref{sec:agular} we show how the efficient computation $|\vq\cdot \hat{\vq}'|$ reflects the angular distance. 

At the training phase, $f_z, f_s, f_{\hat{s}}, f_d$ are jointly trained, using the end-to-end loss function
\begin{equation}
\Ls = d^{\app}(\vy, \hat{\vy}) + \eta \, \Ls^{\vib}, 
\label{eqn:loss}
\end{equation}
where $\Ls^{\vib}$ and $d^{\app}(\vy, \hat{\vy})$ are given in Equations \eqref{eqn:vib_upperbound} and \eqref{eqn:pose_loss} respectively, and $\eta$ is the tuning factor between the \ac{VIB} and the application-oriented distortion.

\subsection{Adaptation policy}\label{ssec:policy}
To minimize the computational cost and delay in the device, we propose a simple heuristic policy to adjust the number of symbols $k$ based on the derived upper bound of the communication rate in Equation \eqref{eqn:IB}.
\begin{figure}[t]
     \centering
         \includegraphics[width=0.49\textwidth]{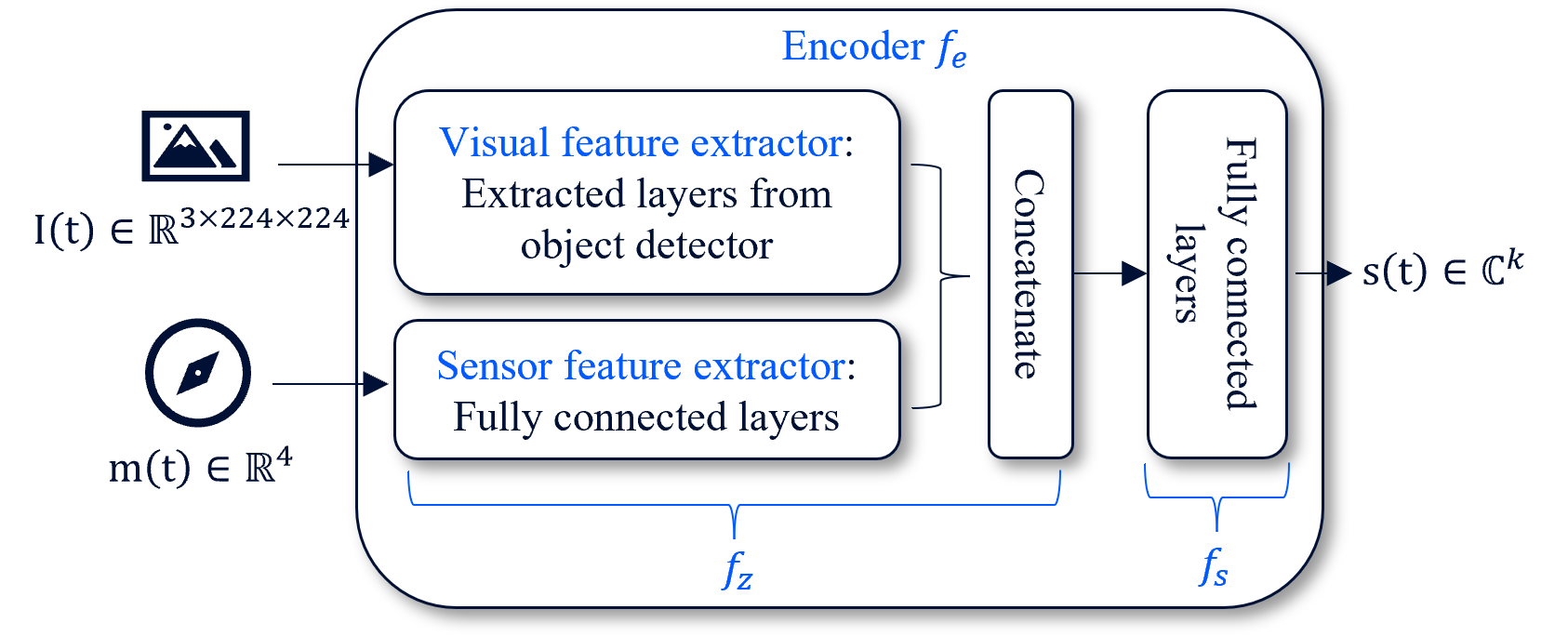}
         \caption{Basic architecture of the multi-input encoder $f_e$.}
\label{fig:BasicEncoder}
\end{figure}
This is because, the communication rate is directly related to the mutual information between the source data and side information $(X, C)$ and the estimated corrupted symbols $\hat{S}$. Thus, we define the estimated $\bar{k}$ proportional to the communication rate upper bound: 
\begin{equation}
\bar{k}(\vx_i, c_i) = \gamma  \KL(p_{\theta}(\hat{\vs}|\vx_i, c_i)|| r(\hat{\vs})), \forall i,
\label{eqn:k_prop}
\end{equation}
where $\gamma$ is a tuning factor.

Moreover, as we will specify in the following subsection, we design a dynamic architecture such that the model complexity is non-decreasing over the symbol dimension $k \in \mathcal{K}$. We also assume that given the same channel state and radio resource, the transmission latency is monotonically non-decreasing over $k$. Thus, the end-to-end latency is a monotonically non-decreasing function of $k$, denoted by $\tau(k)$. With the given architecture, the inference time of the autoencoder can be estimated in advance. Also, with the channel feedback and given radio resource, the transmission delay can be also estimated. Thus, we assume that $\tau(k)$ is a known monotonically non-decreasing function over $k$. The policy is then simply to find
\begin{equation}
k^{\ast}(\vx_i, c_i) = \argmax_{k\in\Ks} \left\{k: k\leq \bar{k}(\vx_i, c_i); \tau(k)\leq \tau^{\thr}\right\}.
\end{equation}
This is easy to find using the non-decreasing property of $\tau(k)$, i.e., if $\tau(\bar{k})\geq \tau^{\thr}$, then stepwise reduce $\bar{k}$ until $\tau(\bar{k})\leq \tau^{\thr}$, otherwise $k^{\ast}=\bar{k}$. 

\subsection{Dynamic neural network architecture}\label{ssec:dynamicNN}
Although there are many options for dynamic neural network architectures \cite{han2021dynamic}, our challenge is to achieve a low end-to-end latency, which means developing a low-complexity architecture. This makes the rate adaptation networks in \cite{dai2022nonlinear,wang2022wireless,shao2022learning} not suitable, as the inference time for any $k\in\Ks$ is similar, due to the constant network size. 

We take a practical assumption that the set of possible numbers of symbols $\Ks$ is a small discrete finite set. This assumption is reasonable, as in typical wireless communication systems, the number of modulation schemes is usually small as well. Thus, we design a multi-head architecture for the symbol encoding layers $f_s$, as shown in Figure \ref{fig:adalayer_enc}. Given $\Ks:=\{k_0, \ldots, k_M\}$, if a particular dimension $k^{\ast}$ is chosen, say, if $k^{\ast}=k_1$, then, only the output head with $2k_1$ output dimension is activated. This means that as $k^\ast$ reduces, the number of \ac{FLOP}s needed to compute the channel input symbols is also reduced. Moreover, the multi-head structure also improves training stability, as we discovered empirically.
Note that, to generate the $k$ complex symbols, we generate a $2k$-dimensional real-valued vector, such that consecutive values are paired to form the real and imaginary parts of a complex symbol.
Moreover, to ensure that each symbol in $\vs$ is subjected to a maximum unit transmission power, the output layer has a Tanh activation function and each symbol $\vs[j], j= 1, \ldots, k$ in $\vs$ is scaled with $\min\{1, 1/\|\vs[j]\|\}$.

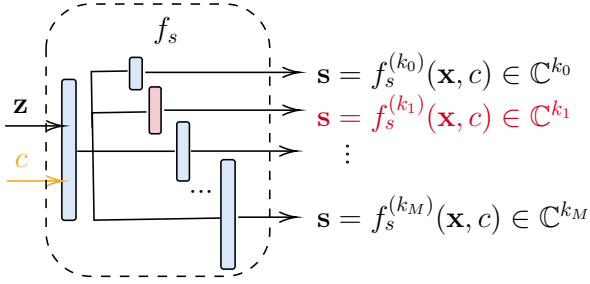
\begin{figure}
        \centering
        \resizebox{.9\columnwidth}{!}{%
       \tikzset{every picture/.style={line width=0.75pt}} 

\begin{tikzpicture}[x=0.75pt,y=0.75pt,yscale=-1,xscale=1]

\draw  [fill={rgb, 255:red, 74; green, 144; blue, 226 }  ,fill opacity=0.2 ] (79.44,116.58) .. controls (79.44,115.58) and (80.25,114.78) .. (81.24,114.78) -- (86.64,114.78) .. controls (87.63,114.78) and (88.44,115.58) .. (88.44,116.58) -- (88.44,202.51) .. controls (88.44,203.5) and (87.63,204.31) .. (86.64,204.31) -- (81.24,204.31) .. controls (80.25,204.31) and (79.44,203.5) .. (79.44,202.51) -- cycle ;
\draw  [fill={rgb, 255:red, 74; green, 144; blue, 226 }  ,fill opacity=0.2 ] (122.44,102.38) .. controls (122.44,101.49) and (123.16,100.78) .. (124.04,100.78) -- (128.84,100.78) .. controls (129.72,100.78) and (130.44,101.49) .. (130.44,102.38) -- (130.44,119.71) .. controls (130.44,120.59) and (129.72,121.31) .. (128.84,121.31) -- (124.04,121.31) .. controls (123.16,121.31) and (122.44,120.59) .. (122.44,119.71) -- cycle ;
\draw  [fill={rgb, 255:red, 208; green, 2; blue, 27 }  ,fill opacity=0.2 ] (134.94,121.04) .. controls (134.94,120.21) and (135.61,119.54) .. (136.44,119.54) -- (140.94,119.54) .. controls (141.77,119.54) and (142.44,120.21) .. (142.44,121.04) -- (142.44,147.81) .. controls (142.44,148.64) and (141.77,149.31) .. (140.94,149.31) -- (136.44,149.31) .. controls (135.61,149.31) and (134.94,148.64) .. (134.94,147.81) -- cycle ;
\draw  [fill={rgb, 255:red, 74; green, 144; blue, 226 }  ,fill opacity=0.2 ] (180.44,167.58) .. controls (180.44,166.58) and (181.25,165.78) .. (182.24,165.78) -- (187.64,165.78) .. controls (188.63,165.78) and (189.44,166.58) .. (189.44,167.58) -- (189.44,233.51) .. controls (189.44,234.5) and (188.63,235.31) .. (187.64,235.31) -- (182.24,235.31) .. controls (181.25,235.31) and (180.44,234.5) .. (180.44,233.51) -- cycle ;
\draw    (132.44,110.31) -- (228.44,110.31) ;
\draw [shift={(230.44,110.31)}, rotate = 180] [color={rgb, 255:red, 0; green, 0; blue, 0 }  ][line width=0.75]    (10.93,-3.29) .. controls (6.95,-1.4) and (3.31,-0.3) .. (0,0) .. controls (3.31,0.3) and (6.95,1.4) .. (10.93,3.29)   ;
\draw  [fill={rgb, 255:red, 74; green, 144; blue, 226 }  ,fill opacity=0.2 ] (152.44,143.38) .. controls (152.44,142.49) and (153.16,141.78) .. (154.04,141.78) -- (158.84,141.78) .. controls (159.72,141.78) and (160.44,142.49) .. (160.44,143.38) -- (160.44,177.71) .. controls (160.44,178.59) and (159.72,179.31) .. (158.84,179.31) -- (154.04,179.31) .. controls (153.16,179.31) and (152.44,178.59) .. (152.44,177.71) -- cycle ;
\draw    (144.44,134.31) -- (228.44,134.31) ;
\draw [shift={(230.44,134.31)}, rotate = 180] [color={rgb, 255:red, 0; green, 0; blue, 0 }  ][line width=0.75]    (10.93,-3.29) .. controls (6.95,-1.4) and (3.31,-0.3) .. (0,0) .. controls (3.31,0.3) and (6.95,1.4) .. (10.93,3.29)   ;
\draw    (161.44,160.31) -- (226.44,160.31) ;
\draw [shift={(228.44,160.31)}, rotate = 180] [color={rgb, 255:red, 0; green, 0; blue, 0 }  ][line width=0.75]    (10.93,-3.29) .. controls (6.95,-1.4) and (3.31,-0.3) .. (0,0) .. controls (3.31,0.3) and (6.95,1.4) .. (10.93,3.29)   ;
\draw    (191.44,202.31) -- (226.44,202.31) ;
\draw [shift={(228.44,202.31)}, rotate = 180] [color={rgb, 255:red, 0; green, 0; blue, 0 }  ][line width=0.75]    (10.93,-3.29) .. controls (6.95,-1.4) and (3.31,-0.3) .. (0,0) .. controls (3.31,0.3) and (6.95,1.4) .. (10.93,3.29)   ;
\draw    (89.16,160.43) -- (151.16,160.43) ;
\draw    (43.44,144.31) -- (78.44,144.31) ;
\draw [shift={(80.44,144.31)}, rotate = 180] [color={rgb, 255:red, 0; green, 0; blue, 0 }  ][line width=0.75]    (10.93,-3.29) .. controls (6.95,-1.4) and (3.31,-0.3) .. (0,0) .. controls (3.31,0.3) and (6.95,1.4) .. (10.93,3.29)   ;
\draw [color={rgb, 255:red, 245; green, 166; blue, 35 }  ,draw opacity=1 ]   (44.44,179.31) -- (78.44,179.31) ;
\draw [shift={(80.44,179.31)}, rotate = 180] [color={rgb, 255:red, 245; green, 166; blue, 35 }  ,draw opacity=1 ][line width=0.75]    (10.93,-3.29) .. controls (6.95,-1.4) and (3.31,-0.3) .. (0,0) .. controls (3.31,0.3) and (6.95,1.4) .. (10.93,3.29)   ;
\draw   (98.73,145.44) -- (98.15,109.81) -- (121.16,109.43) ;
\draw   (99.16,160.43) -- (98.76,136) -- (133.16,135.43) ;
\draw   (180.3,202.87) -- (99.88,204.2) -- (99.16,160.43) ;
\draw  [dash pattern={on 4.5pt off 4.5pt}] (69.44,95.29) .. controls (69.44,79.6) and (82.16,66.89) .. (97.84,66.89) -- (183.04,66.89) .. controls (198.72,66.89) and (211.44,79.6) .. (211.44,95.29) -- (211.44,211.49) .. controls (211.44,227.17) and (198.72,239.89) .. (183.04,239.89) -- (97.84,239.89) .. controls (82.16,239.89) and (69.44,227.17) .. (69.44,211.49) -- cycle ;

\draw (136,73.81) node [anchor=north west][inner sep=0.75pt, font=\Large]    {$f_s$};
\draw (47,125.81) node [anchor=north west][inner sep=0.75pt, font=\Large]    {$\vz$};
\draw (48,160.81) node [anchor=north west][inner sep=0.75pt, font=\Large]  [color={rgb, 255:red, 245; green, 166; blue, 35 }  ,opacity=1 ]  {$c$};
\draw (158.52,183.66) node [anchor=north west][inner sep=0.75pt, font=\Large]  [rotate=-358.66]  {$...$};
\draw (240,95.81) node [anchor=north west][inner sep=0.75pt, font=\Large]    {$\vs=f_s^{( k_{0})}(\vx, c)\in\C^{k_0}$};
\draw (240,186.81) node [anchor=north west][inner sep=0.75pt, font=\Large]    {$\vs=f_s^{( k_{M})}(\vx, c)\in\C^{k_M}$};
\draw (240,122.81) node [anchor=north west][inner sep=0.75pt, font=\Large]  [color={rgb, 255:red, 208; green, 2; blue, 27 }  ,opacity=1 ]  {$\vs=f_s^{( k_{1})}(\vx, c)\in\C^{k_1}$};
\draw (263.32,153.07) node [anchor=north west][inner sep=0.75pt, font=\Large]  [rotate=-90.15]  {$...$};

\end{tikzpicture}
        }
				\caption{Multi-head encoder. E.g., if $k^{\ast}=k_1$, then, only the head corresponding to $k_1$ is activated.}
    \label{fig:adalayer_enc}
    \end{figure}

As for the decoder $f_d$,  we use an \ac{MLP}, to map the received noisy symbols $\hat{\vs}$ to the predicted pose $\vy$ directly. 
The input layer is $2k_M$-dimensional (including both real and imaginary parts), where $k_M$ is the maximum number of symbols. If a decoder receives $k$ symbols, it simply zero-pads the symbols to a $2k_M$-dimensional vector and uses it as the input. Note that, unlike the multi-head encoder, in the decoder we do not use the multi-input design. Therefore, the decoder inference time is almost the same for all $k\in\Ks$. The reason is that the decoder has a simpler \ac{MLP} architecture with very low inference time, and it is implemented in the server with more powerful computational capability. Moreover, experimental results show that using the zero-padding method in the decoder results in faster convergence than the multi-input architecture.
Overall, the proposed AdaSem algorithms for the training and inference phase are summarized in Algorithm \ref{algo:AdaSem} and \ref{algo:AdaSem_inf}, respectively.

\begin{algorithm}[h]
    \caption{AdaSem model training}
    \label{algo:AdaSem}
      \textbf{Input} $T$ number of epochs, $B$ batch size, dataset $\Ds:=\{\vx_i, c_i, \vy_i\}_{i=1}^N$, tuning factors $\alpha, \beta, \gamma, \eta$, delay threshold $\tau^{\thr}$\\
      \vspace{-3ex}
    \begin{algorithmic}[1]  
    \For{epoch $t=1, ..., T$}
    \For{minibatch $b = 1, ..., \lfloor N/B \rfloor$}
        \For{sample $i = 1, \ldots, B$}
        \State Choose $k_i$ based on \eqref{eqn:k_prop}
        \EndFor
        \State Compute application loss $d^{\app}$ based on \eqref{eqn:pose_loss} 
        \State Compute per minibatch $\Ls^{\vib}$ based on \eqref{eqn:empirical_vib} by
        \State replacing $N$ by $B$
        \State Compute loss based on \eqref{eqn:loss}
        \State Update weights and biases of $f_z, f_s, f_{\hat{s}}, f_d$
    \EndFor
    \EndFor
    \end{algorithmic}
\end{algorithm}
\begin{algorithm}[h]
    \caption{AdaSem inference}
    \label{algo:AdaSem_inf}
      \textbf{Input} Test inputs $\{\vx_i, c_i\}_{i=1}^{N'}$, tuning factors $\alpha, \beta, \gamma, \eta$, delay threshold $\tau^{\thr}$\\
      \vspace{-3ex}
    \begin{algorithmic}[1]  
    \For{sample $i = 1, \ldots, N'$}
        \State Collect source data $\vx_i$ and channel feedback $c_i$ 
        \State Choose $k_i$ based on \eqref{eqn:k_prop}
        \State Compute symbols $\vs_i\leftarrow f_e(\vx_i, c_i)$
        \State Transmit symbols $\vs_i$ over channel $h(\cdot)$
        \State Receive noisy symbols $\hat{\vs}_i \leftarrow h(\vs)$
        \State Decode and infer the camera pose $\hat{\vy}\leftarrow f_d(\hat{\vs})$        
    \EndFor
    \end{algorithmic}
\end{algorithm}
\section{Experimental Results}\label{sec:Experiment}
To compare the performance in a realistic scenario, we evaluate the performance of AdaSem against a real implemented baseline -- a remote camera relocalization service for \ac{AR}-supported radio map visualization application \cite{liao2022haru}.
In this section, we first describe the baseline system setup and data collection. Then, we provide further details on the implemented neural network architecture and hyperparameters for the proposed AdaSem algorithm. Finally, the performance comparison, feature interpretation, and evaluation of the adaptation policy will be presented.  
\subsection{Baseline system setup and data collection}
We consider a remote camera relocalization service for \ac{AR}-supported radio map visualization application. The baseline system is implemented with a self-developed Android App on a mobile device and a neural network for camera pose inference in the edge server as shown in Figure \ref{fig:e2e_app}. We use a Nokia $6$ as the mobile device, and Levono Thinkpad P43s equipped with a Quadro-P520 GPU as the edge server for real-time inference. However, the neural network for camera pose inference is pretrained on a cloud server equipped with $8$ Nvidia Tesla K80 GPUs using the data collected by the App. 

The baseline system follows the standard communications protocols as shown in Figure \ref{fig:convComm}. The communication between the device and the server is via a dedicated Wi-Fi integrated with a Nokia FW2HC omni antenna, with an antenna gain of 4.7 dBi, operated at 2.4 GHz. The App in the device captures the RGB camera stream with a resolution of $480$p and the motion sensor measurements of 4D relative quaternion derived from Android attitude composite sensors \cite{AndroidSensors}.  
The codec for the camera stream is H.265 and a low-pass filter is applied to the sensor output to reduce the noise artifacts and smooth the signal reading. This is because mobile devices usually use low-cost \ac{IMU} sensors. We use a custom extension to \ac{RTP} to include both the compressed camera stream and sensor data in the \ac{RTP} packets to transmit over the wireless channel. The received data is channel and source decoded, and used as the input to a \ac{DNN} to infer the camera pose. Note that for a fair comparison, the \ac{DNN} follows the same structure as the semantic encoder in Figure \ref{fig:BasicEncoder}, except that the output is the camera pose instead of the transmission symbols. The inferred camera pose is sent back to the device for the rendering of the augmented radio map. Note that the \ac{AR} service is outside of the scope of this paper.  

\begin{figure}[t]
     \centering
\includegraphics[width=0.48\textwidth]{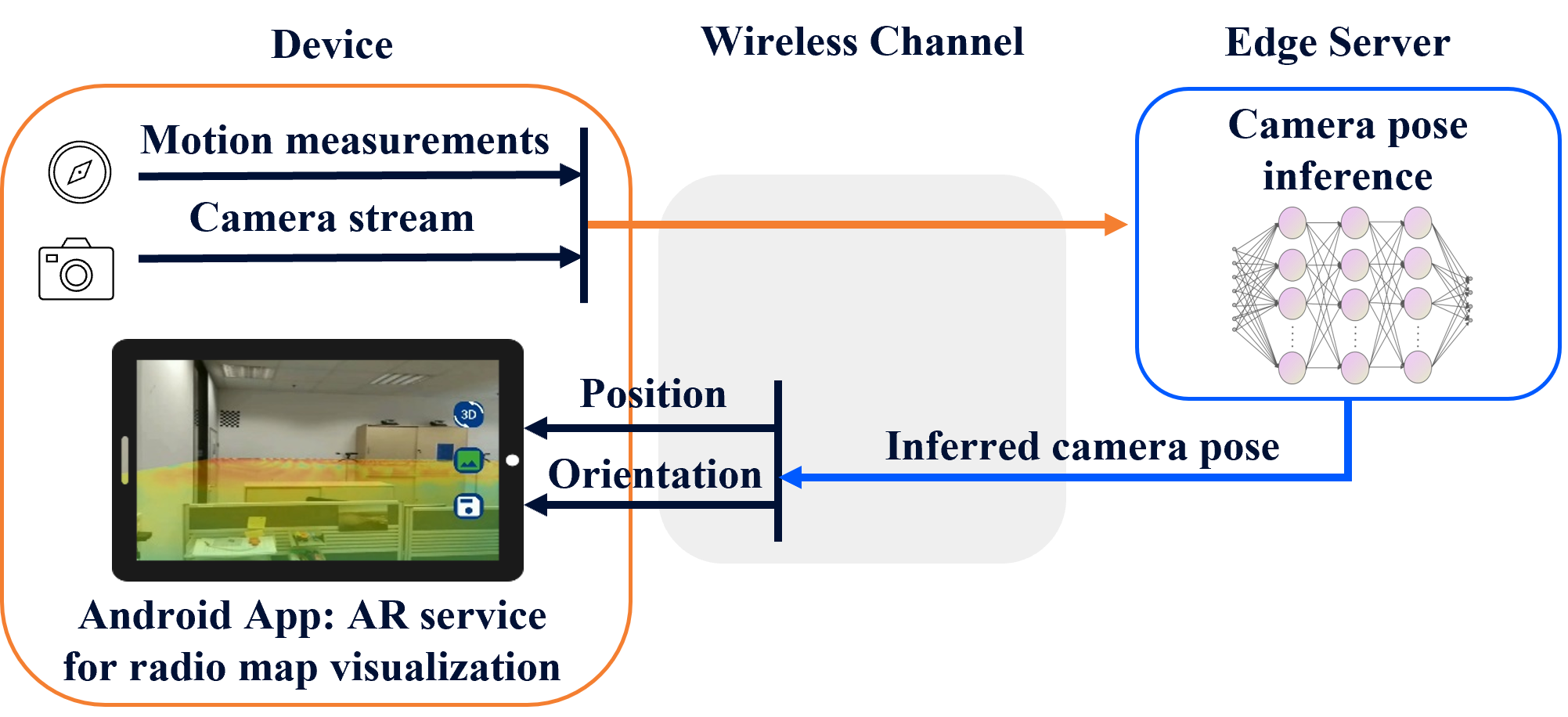}
         \caption{Implemented baseline system (non-semantic).}
\label{fig:e2e_app}
\end{figure}
We conducted the experiment in a real office environment of a size of $5.8 \mbox{m}\times 5.2 \mbox{m} \times 3.85 \mbox{m}$. The developed App can collect the camera and motion sensor data at a rate of $30$ fps and read the signal strength every $30$ ms. The remaining problem is how to derive the ground truth of the camera pose. We first leverage the visual-inertial ORB-SLAM3 \cite{campos2021orb} algorithm to automatically generate the relative camera poses $\vy(t)$ with respect to a random reference coordination system. Then, we transform them into the absolute world coordinate system by utilizing markers placed in the room to accurately identify the room edges. In total, we collected $67681$ data samples. 
To generalize the data for arbitrary viewpoints, we generated a 3D model as a digital twin of the room and computed the radio map for the whole environment using ray-tracing \cite{hoppe2017wave}. The ray-tracing generated signal strength and an assumed AWGN channel are then used for training, validation, and testing instead of the spatially sparse real radio measurements, with a sample split ratio of $60/20/20$. That is, we assume $\hat{\vs}(t) = \vs(t) + \mathbf{n}(t)$, where $\mathbf{n}(t) \sim CN(0, \sigma_n^2 (t) \mathbf{I})$ is an independent and identically distributed (i.i.d.) complex Gaussian random vector with  $\sigma_n^2(t) = 1 / \text{SNR}(t)$, where $\text{SNR}(t)$ is the channel \ac{SNR} at time $t$, as generated by the ray-tracing algorithm using the measured signal strength $c(t)$. Note that this assumption is different from those in prior works, e.g., \cite{dai2022nonlinear}, where the channel model and condition are purely simulated. In our work, we assume an AWGN channel model but utilize measured \ac{SNR} at various points in the room for training and testing, which is not only more realistic but also matches the assumptions that the Wi-Fi system in the baseline makes.
%
%
\subsection{Neural network architecture and training hyperparameters}
The designed architecture of AdaSem is provided in Table \ref{Tab:architecture}, where we use $(n_0, n_1, \ldots, n_D)$ to denote $D$ fully connected layers with $n_d$ as the number of neurons in a layer. The input layer $n_0$ is not counted in $D$. Except for the output layers, all the fully connected layers use the Leaky ReLU activation before feeding into the next layer. We assume a finite set of the optional numbers of symbols $\Ks:=\{64, 128, 192, 256, 320, 384, 448, 512\}$, i.e., $M = |\Ks| = 8$ and the maximum number is $K_M = 512$. Recall that the output layers of encoder $f_s$ is $2k$ if $k$ is the value chosen by the adaptation policy, because it includes both the real and imaginary parts. The output layer of the variational encoder $f_{\hat{s}}$ is $4 k_M$ because it outputs both $g^{\vmu}(\vx, c)$ and $\ve{\sigma}(\vx, c)$, where we assume $g^{\msig}(\vx, c)$ is a diagonal matrix with diagonal $\ve{\sigma}(\vx, c)$. 

We use $\alpha=0.7$, $\beta=1$, $\eta=0.2$, and $\gamma=100$, for these hyperparameters shown in Section \ref{sec:solution_AdaSem}. The end-to-end constraint is defined by $\tau^{\thr}=16$ ms per request as we target an \ac{AR} service of $60$ fps. For training, we use the Adam optimizer with a learning rate of $0.0001$ and batch size of $64$. 

%
\begin{table}[t]
\caption{Architecture of AdaSem}
\centering
\label{Tab:architecture}
\begin{tabular}{lllllll}
\cline{1-3}
\multicolumn{1}{|l|}{} & \multicolumn{1}{l|}{Layers}  & \multicolumn{1}{l|}{\begin{tabular}[c]{@{}l@{}}Output \\ Dimensions\end{tabular}} &  &  &  &  \\ \cline{1-3}
\multicolumn{1}{|l|}{\begin{tabular}[c]{@{}l@{}}AdaSem  \\ Encoder\end{tabular}} & \multicolumn{1}{l|}{\begin{tabular}[c]{@{}l@{}}Visual feature extractor: \\ the lowest 3 layers of MobileNetV3\end{tabular}}  & \multicolumn{1}{l|}{$28\times 28 \times 24 $}                                     &  &  &  &  \\ \cline{2-3}
\multicolumn{1}{|l|}{}  & \multicolumn{1}{l|}{\begin{tabular}[c]{@{}l@{}}Sensor feature extractor: \\ two fully connected layers $(4, 8, 4)$\end{tabular}} & \multicolumn{1}{l|}{4}        &  &  &  &  \\ \cline{2-3}
\multicolumn{1}{|l|}{}   & \multicolumn{1}{l|}{\begin{tabular}[c]{@{}l@{}}$f_z$: Concatenator added by two \\  fully connected layers \\ $(28\times28\times 24 + 4, 4k_M, 2k_M)$\end{tabular}} & \multicolumn{1}{l|}{$2k_M$}  &  &  &  &  \\ \cline{2-3}
\multicolumn{1}{|l|}{}  & \multicolumn{1}{l|}{\begin{tabular}[c]{@{}l@{}}$f_{\hat{s}}$: one fully connected layer\\ $(2k_M, 4k_M)$\end{tabular}}   & \multicolumn{1}{l|}{$4k_M$ (for $\vmu$ and $\ve{\sigma}$)}    &  &  &  &  \\ \cline{2-3}
\multicolumn{1}{|l|}{}  & \multicolumn{1}{l|}{\begin{tabular}[c]{@{}l@{}}$f_s$: Multi-head layer, for head $m$\\ $(2k_M, 2k_m)$\end{tabular}}  & \multicolumn{1}{l|}{$2k$}  &  &  &  &  \\ \cline{1-3}
\multicolumn{1}{|l|}{\begin{tabular}[c]{@{}l@{}}AdaSem \\ Decoder\end{tabular}}   & \multicolumn{1}{l|}{\begin{tabular}[c]{@{}l@{}}$f_d$: 4 fully connected layers \\ $(2k_M,  512, 128,  32, 6)$\end{tabular}}       & \multicolumn{1}{l|}{6}  &  &  &  &  \\ \cline{1-3} &  & &  &  &  & 
\end{tabular}
\vspace{-2ex}
\end{table}
\subsection{Performance comparison}
To compare to the baseline, we also assume a dedicated Wi-Fi for AdaSem. A 20 MHz 802.11n/ac Wi-Fi consists of 64 subcarriers, while 52 of them are used to carry data \cite{hiertz2010ieee}. For a fair comparison, we only assume that half of them are used for the uplink. The legacy symbol duration of 802.11ac is 3.2 $\mu$s with a guard interval of $0.8$ $\mu$s, thus we estimate the communication time of transmitting $k$ symbols to be $k\cdot 4\cdot 10^{-6}/26$ second, i.e., $k\cdot 1.54\cdot 10^{-4}$ ms.

\begin{table*}[t]
\caption{Performance Comparison}
\centering
\label{Tab:comparison}
\begin{tabular}{|l|l|l|l|l|l|l|}
\hline
 \multirow{2}{*}{Scheme} & \multirow{2}{*}{Avg. $k^\ast$} & Position MAE & Orientation Error & Encoder latency  & Decoder latency  & E2E latency \\
  &  & (in m) &  (in °) & (in ms) &  (in ms) & (in ms)
\\ \hline
Baseline & N/A               & 0.022            & 2.523                & {\begin{tabular}[c]{@{}l@{}} 14.670 (compression +  \\ coding + transmission) \end{tabular}}                      & {\begin{tabular}[c]{@{}l@{}}  3.174 (decompression + decoding) \\ 2.628 (regression)\end{tabular}}                      & 20.367           \\ \hline
AdaSem (fix $k=64$)      &    64            & 0.011            & 0.497                 & {\bf 1.753}                & {\bf 0.437}               & {\bf 2.288}        \\\hline
AdaSem (fix $k=128$)     &    128           & 0.013            & 0.546                & 1.759               & 0.480                & 2.436          \\\hline
AdaSem (fix $k=256$)     & 256               & 0.008            & 0.341                 & 3.190               & 0.466               & 4.050           \\ \hline
AdaSem (fix $k=512$)    & 512              & {\bf 0.007}            & {\bf 0.299}                & 5.129                & 0.461              & 5.338           \\\hline
AdaSem (adaptive $k$)   &   $244$            &    $0.008$        &   $0.322$           &        $4.082$        &     $0.468$         &         $4.926$   \\\hline
\end{tabular}
\vspace{-0.2cm}
\end{table*}

The performance comparisons between the baseline solution, AdaSem with fixed $k$ (i.e., without the adaptation policy $\pi$), and AdaSem are shown in Table \ref{Tab:comparison}. Note that the results shown are averaged over the test dataset. Moreover, Figure \ref{fig:delay_pos} and \ref{fig:delay_arg} show the tradeoff between the end-to-end latency and the position \ac{MAE} and angular distance, respectively. The position \ac{MAE} is defined as $1/N \sum_{i=1}^N|\vp_i - \hat{\vp}_i|$ for a test dataset size $N$.
The observations are summarized below.
\begin{itemize}
    \item {\bf Semantic advantage.} In Table \ref{Tab:comparison}, we show that both semantic schemes (with fixed or adaptive $k$) dramatically outperform the baseline in terms of both end-to-end latency and inference accuracy. Even with the largest $k=512$, the semantic scheme achieves $73.8\%$ lower delay, $68.2\%$ lower position error, and $88.1\%$ lower angular error than the baseline. The proposed AdaSem approach achieves $75.8\%$ lower delay, $63.6\%$ lower position error, and $87.2\%$ lower angular error, with an additional advantage of more efficient channel utilization.
    \item {\bf Delay-Accuracy tradeoff.} In Figure \ref{fig:perf_comp}, we average the delay of AdaSem for different intervals and compute the corresponding average inference error. We compare its performance against the averaged performance of the baseline and AdaSem with fixed $k$. We show that AdaSem achieves a good tradeoff between the end-to-end delay and the inference error.  
    \item {\bf Flexible resource utilization.} Compared to fixed $k$, although AdaSem sacrifices slightly the inference latency due to the additional complexity of the variational encoder $f_{\hat{s}}$, it enables flexible and efficient usage of the radio resource (shown in Figure \ref{fig:adaptive_k}), and is more robust against varying channel conditions (shown in Figure \ref{fig:angerr_vs_snr}).
\end{itemize}

\begin{figure}[t]
     \centering
     \begin{subfigure}{.43\textwidth}
         \centering         \includegraphics[width=\textwidth]{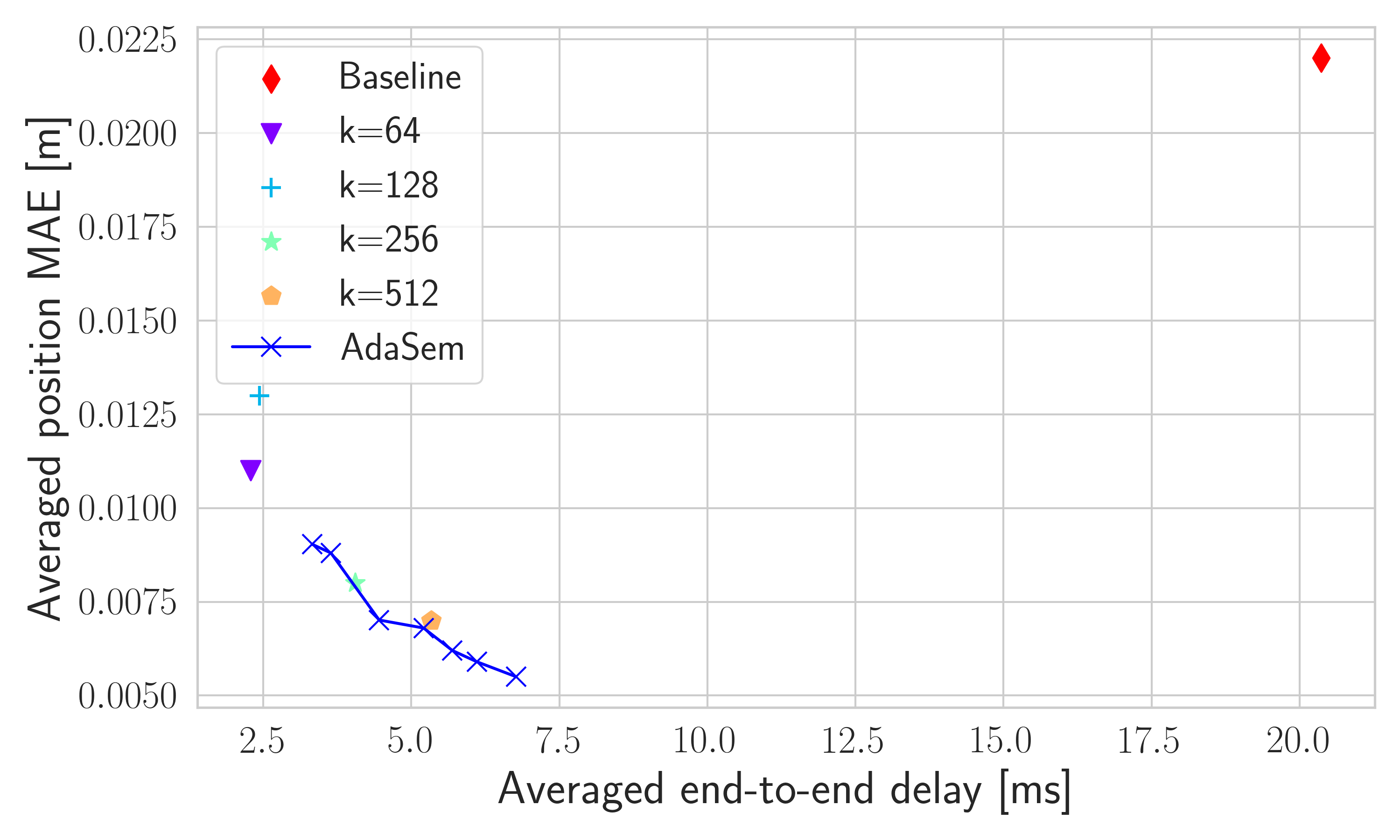}
         \caption{End-to-end delay versus position \ac{MAE}.}
         \label{fig:delay_pos}
     \end{subfigure}
     \begin{subfigure}{0.43\textwidth}
         \centering
    \includegraphics[width=\textwidth]{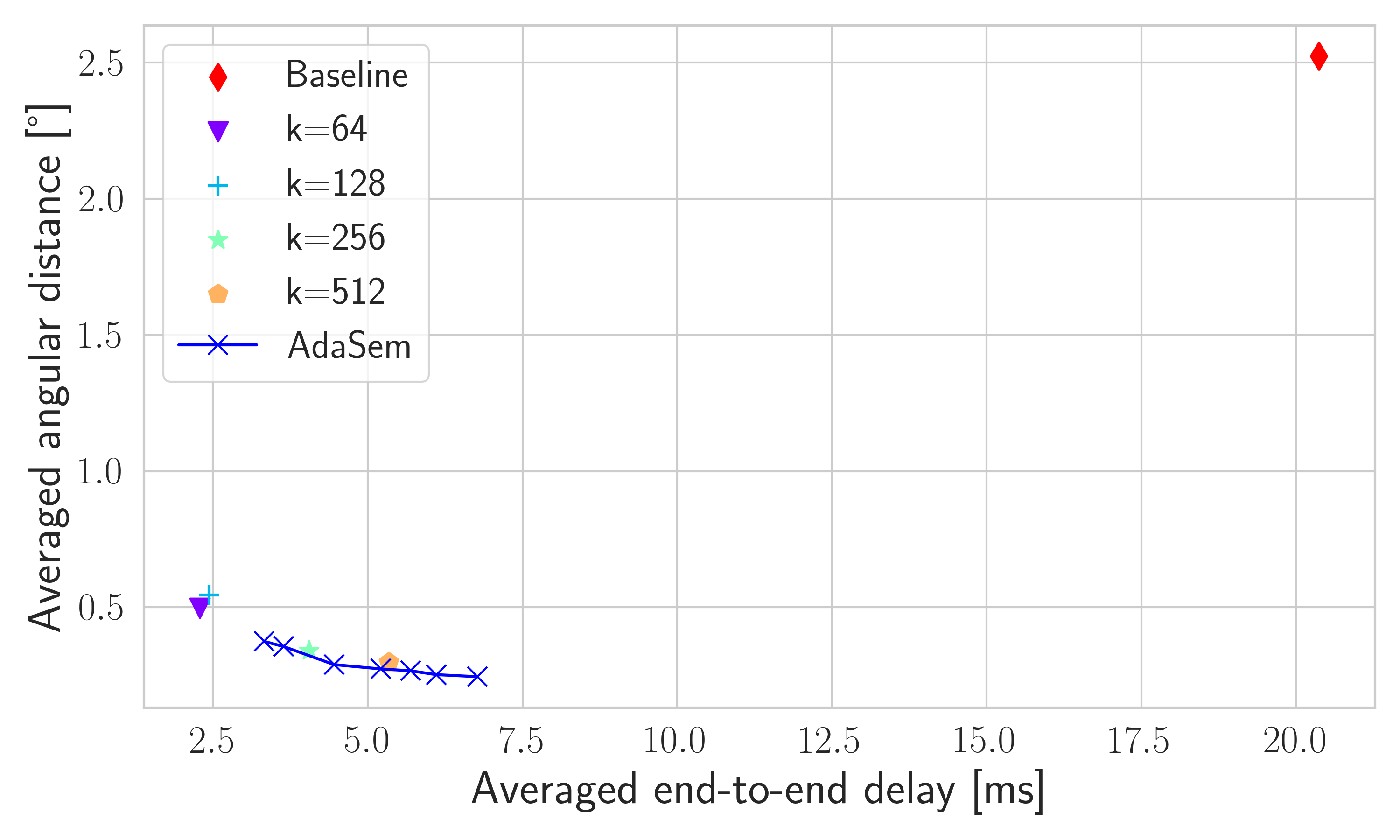}
         \caption{End-to-end delay versus angular distance.}
         \label{fig:delay_arg}
     \end{subfigure}
        \caption{Performance comparison}
        \label{fig:perf_comp}
\end{figure}

\subsection{Interpretation of the extracted features and encoded symbols}
To investigate the type of features that are learned for our camera relocalization task, in Figure \ref{fig:feat_interpretation} we plot the output of the visual feature extractor for two examples: when $k=512$ and $k=64$, respectively. As shown in Table \ref{Tab:architecture}, the output of the visual feature extractor contains $24$ feature maps with dimension $28\times28$. It can be seen that the learned features represent the edges and corners of the image, confirming our assumption that the lower layers of an object detector \ac{DNN} are trained to extract such types of features. These features are also similar to those used in classical geometric-based localization algorithms that depend on keypoint matching, as well as those of keypoint detectors that detect edges and corners, e.g., the \ac{ORB} \cite{rublee2011orb} detector. We also observe that with a lower value of $k$, the features are less detailed, due to the lower communication rate.

In Figure \ref{fig:symbol_interpretation} we show $\vs$ for $k = 64$ and $512$ respectively using a particular input sample containing the image in Figure \ref{fig:img}. We observe that in this case, $k=512$ leads to a much higher level of redundancy since the encoded symbols are highly correlated, while for $k=64$, the symbols are less correlated. With this observation, we hypothesize that, unlike the conventional communications schemes, where the transmitted information (e.g., features) are encoded onto channel symbols independently, the semantic autoencoder learns a joint distribution of the meaningful features and may find a way to encode their relationships positionally within $\vs$, when given sufficient degrees of freedom. This shows that, the reason that the visual-based localization tasks can benefit from semantic communications approaches is partially due to the ability of the encoder to spatially correlate the detected features in the transmitted symbols.

\begin{figure*}[t]
     \centering
     \begin{subfigure}{.232\textwidth}
\centering\includegraphics[width=.9\textwidth]{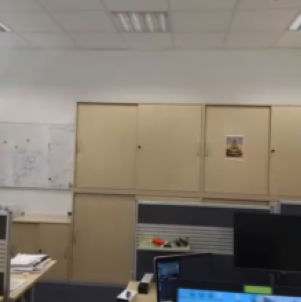}
         \caption{Input image $\mI\in\R^{3\times 244\times 244}$.}
         \label{fig:img}
     \end{subfigure}
     \begin{subfigure}{0.35\textwidth}
         \centering\includegraphics[width=.9\textwidth]{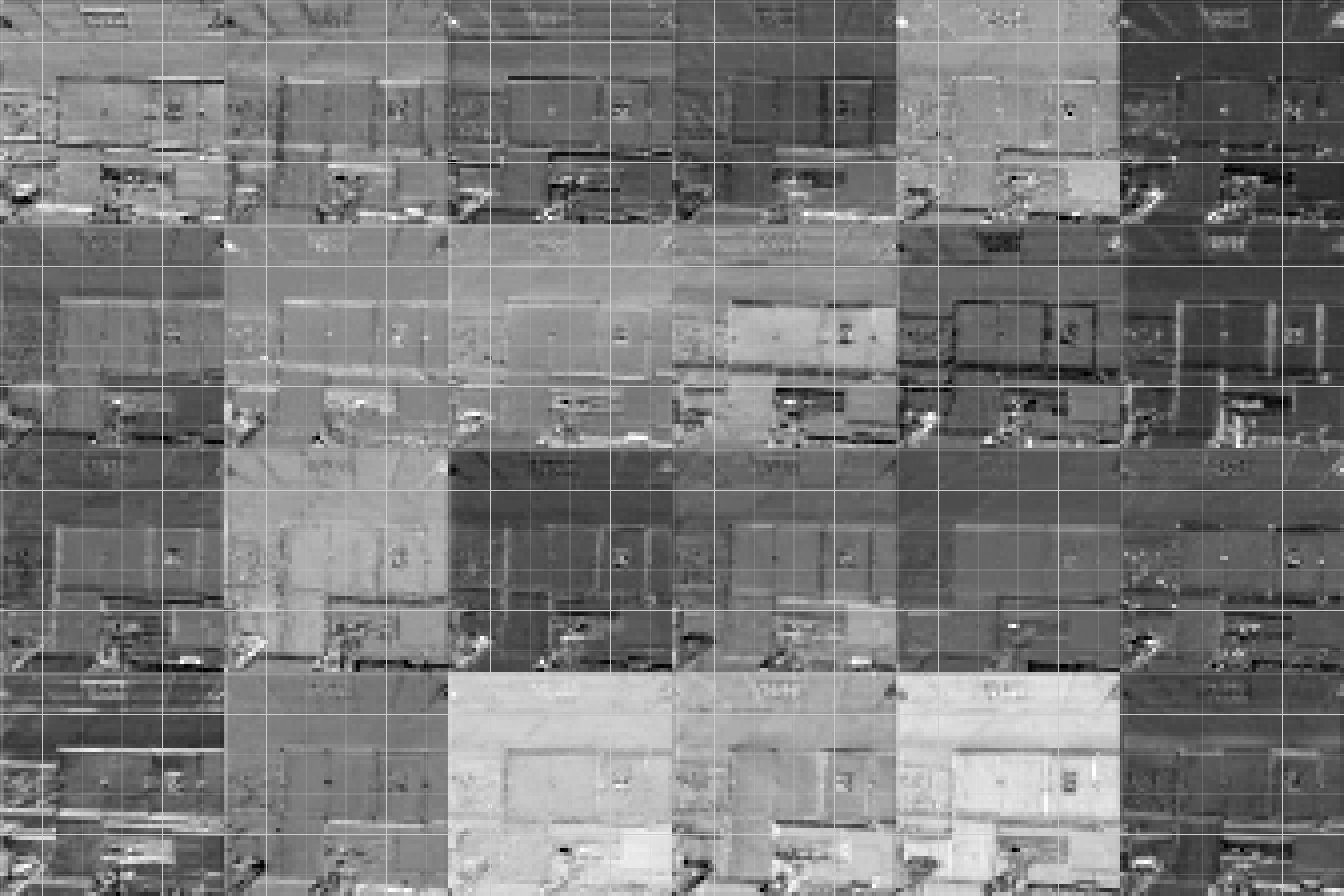}
         \caption{Extracted visual features: $k=512$.}
         \label{fig:feature_512}
     \end{subfigure}
     \begin{subfigure}{.35\textwidth}
         \centering\includegraphics[width=.9\textwidth]{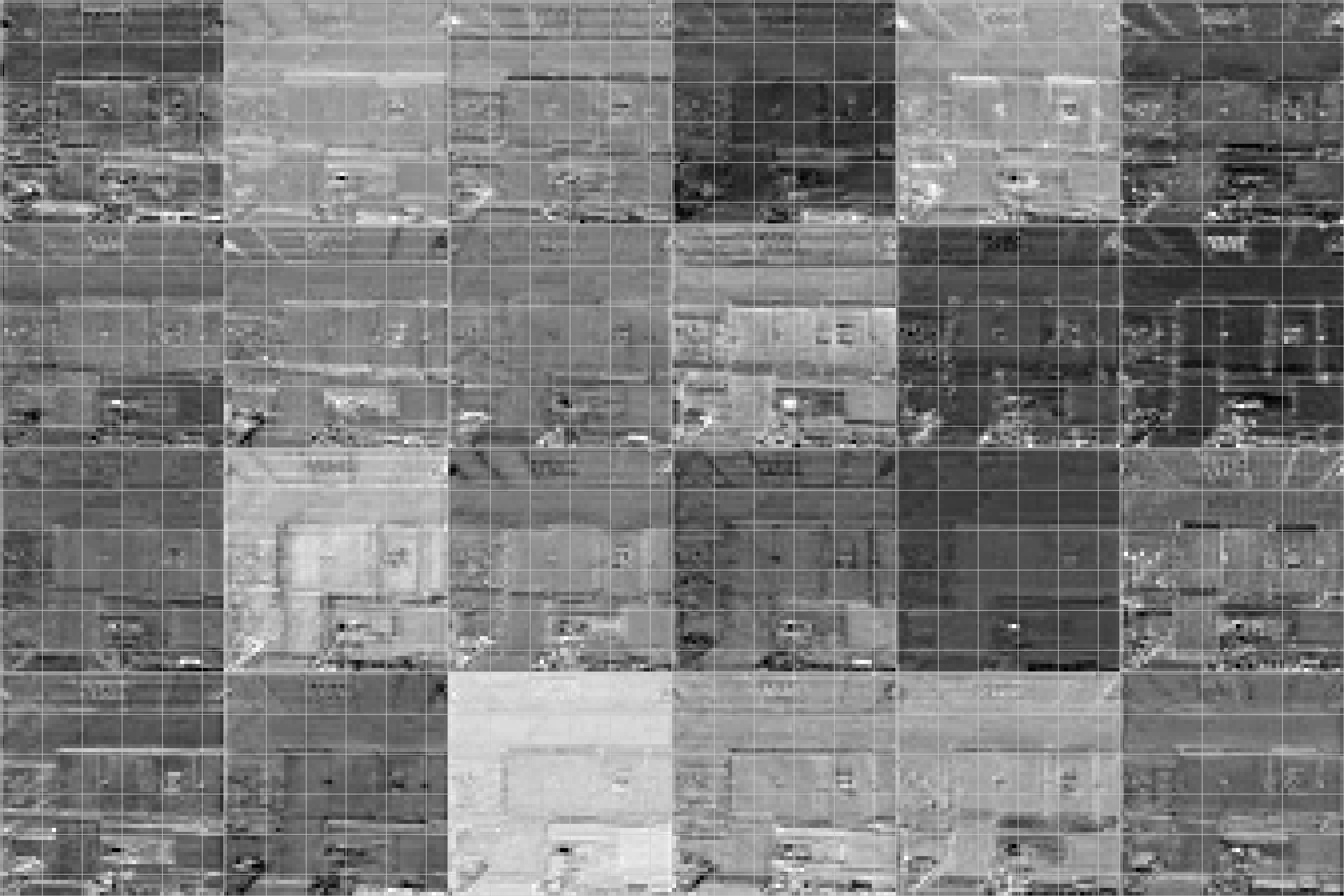}
         \caption{Extracted visual features: $k=64$.}
         \label{fig:feature_2}
     \end{subfigure}
        \caption{Feature interpretation.}
        \vspace{-2ex}
        \label{fig:feat_interpretation}
\end{figure*}

\begin{figure}[t]
     \centering
     \begin{subfigure}{.24\textwidth}
         \centering
         \includegraphics[width=\textwidth]{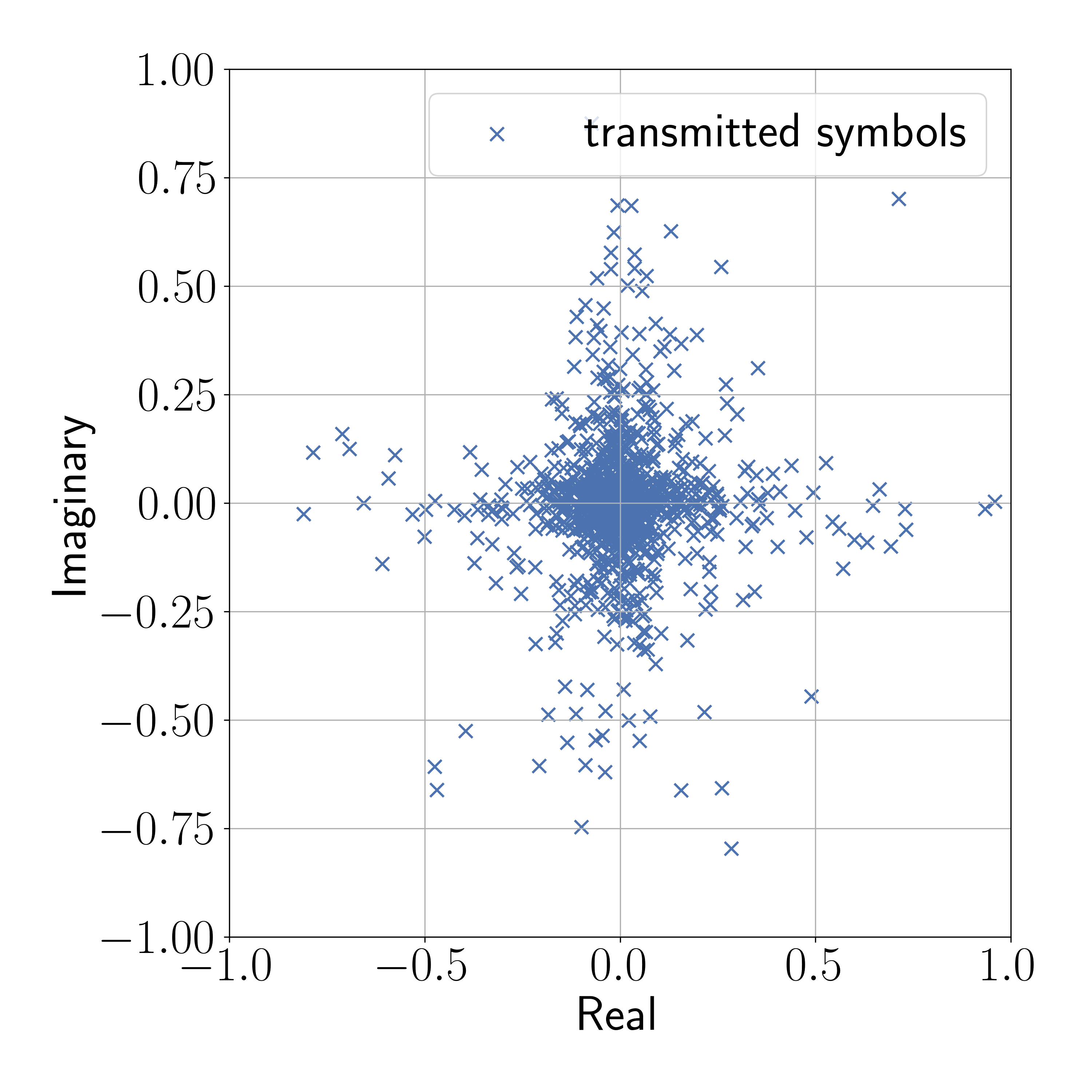}
         \caption{Encoded symbols: $k=512$.}
         \label{fig:sym_k512}
     \end{subfigure}
     \begin{subfigure}{0.24\textwidth}
         \centering
         \includegraphics[width=\textwidth]{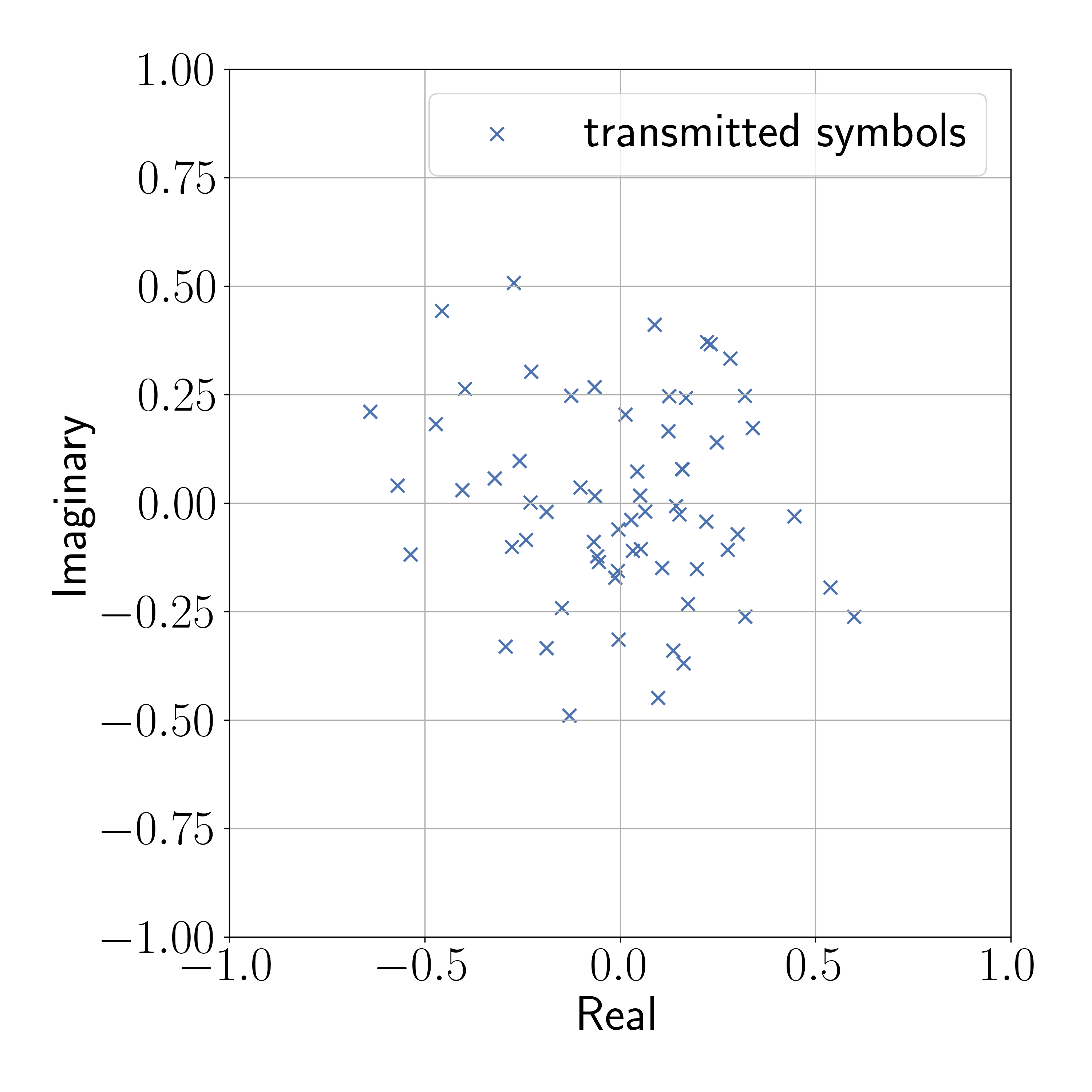}
         \caption{Encoded symbols: $k=64$.}         \label{fig:sym_k64}
     \end{subfigure}
        \caption{Encoded symbols.}   
        \vspace{-2ex}
        \label{fig:symbol_interpretation}
\end{figure}

\subsection{Evaluation on adaptation policy}
In Figure \ref{fig:adaptive_k} we show how the policy $\pi$ adapts $k$ to the channel SNR. In general, when the channel quality improves, the policy reduces the dimension of the encoded symbols, as less redundancy is required to achieve the same rate-distortion tradeoff. It is worth noting that $k$ depends not only on the channel condition but also on the source data distribution since $k$ is optimized based on the upper bound of $I(\hat{S}; X,C)$. Then, Figure \ref{fig:angerr_vs_snr} shows that AdaSem provides stable, low-error orientation estimation under varying averaged \ac{SNR} values. A similar result is also derived for position estimation, but due to the limited space, we omit the figure.  
Lastly, we show in Figure \ref{fig:r_vs_d}, that the tradeoff between the estimated communication rate and distortion is convex, as expected, based on the empirical computation of Equation \eqref{eqn:empirical_vib} in the testing phase.
\begin{figure}[t]
     \centering
\includegraphics[width=0.47\textwidth]{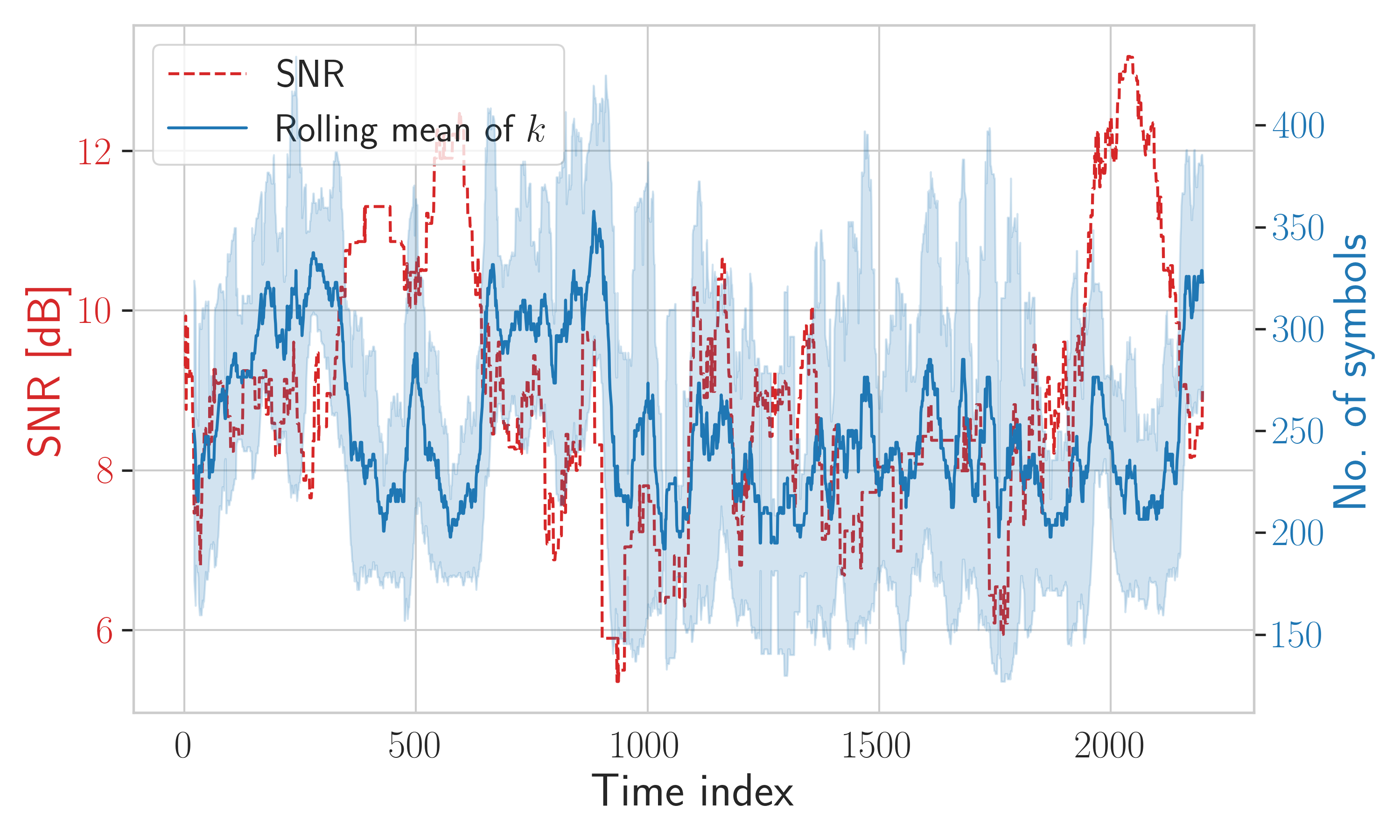}
\vspace{-1ex}
      \caption{Channel-aware number of symbols: rolling mean of the adaptive $k$ along the testing trajectory.}
\label{fig:adaptive_k}
\end{figure}
\begin{figure}[t]
     \centering
\includegraphics[width=0.45\textwidth]{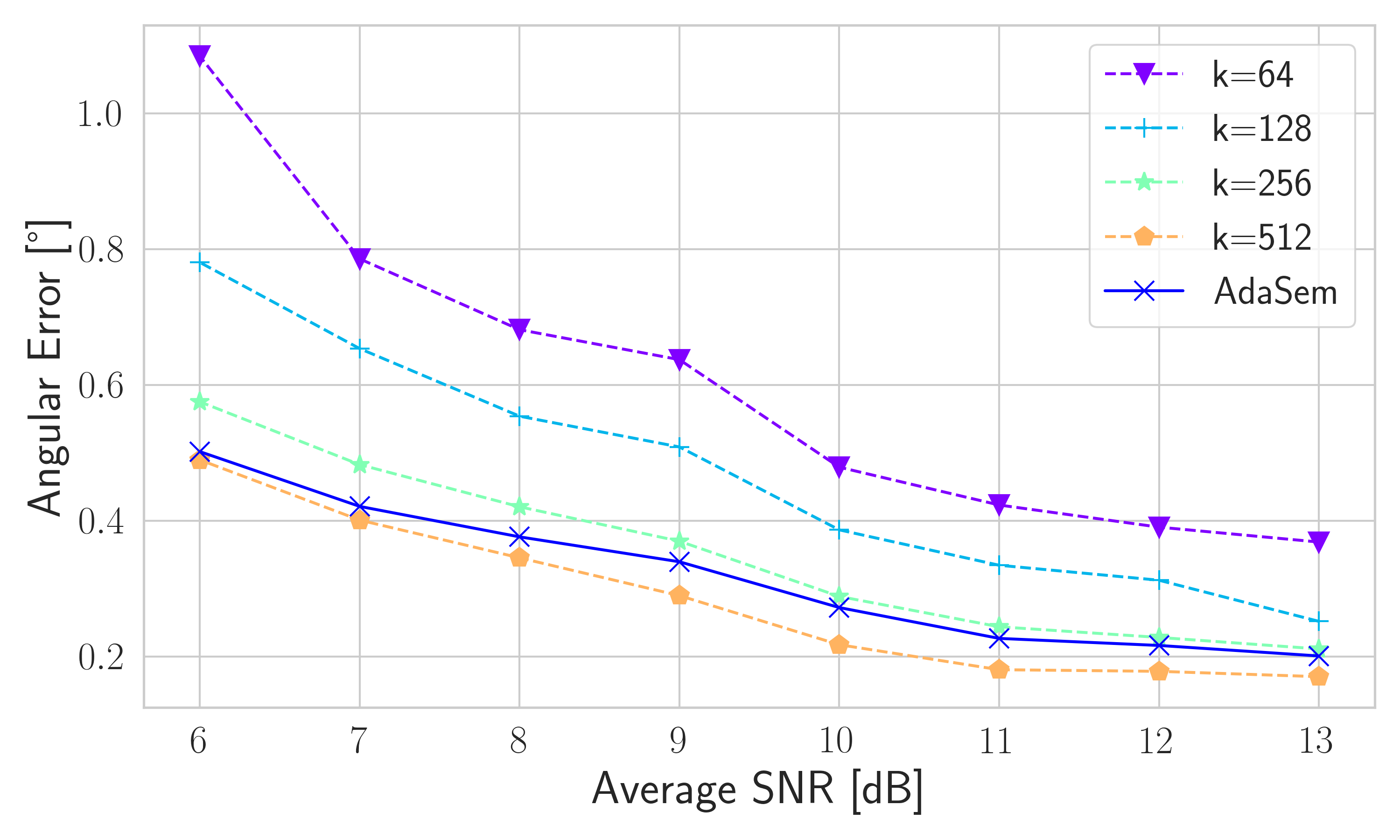}
\vspace{-1ex}
      \caption{Averaged angular distance versus \ac{SNR}.}
\label{fig:angerr_vs_snr}
\end{figure}
\begin{figure}[t]
     \centering
\includegraphics[width=0.45\textwidth]{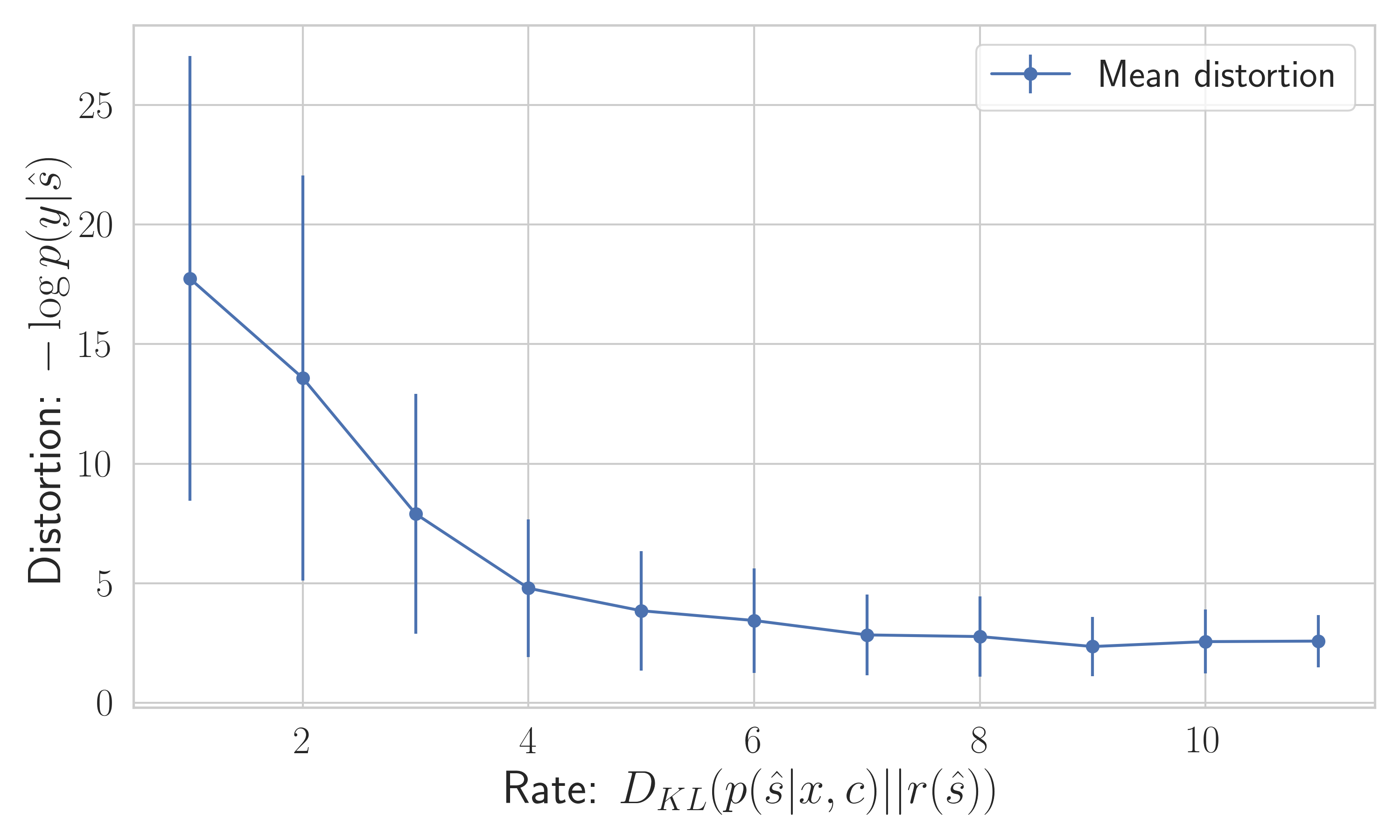}
\vspace{-1ex}
         \caption{Rate versus distortion.}
\label{fig:r_vs_d}
\end{figure}
\section{Conclusion}\label{sec:Conclusion}
We have developed a goal-oriented semantic communications system for camera relocalization, called AdaSem, that can trade off end-to-end latency against inference performance.
Compared to prior works, AdaSem not only improves the transmission delay by utilizing an adaptive policy for varying the communication rate but also uses a lightweight architecture for the encoder and decoder, such that the end-to-end latency target is met.
To optimize the adaptive communication rate policy, we define a channel-aware information bottleneck and derive a variational upper bound, which is used to optimize the communication rate and inference distortion tradeoff.
The result is an extremely fast, end-to-end camera relocalization system that not only improves the estimation performance but also the end-to-end latency when compared to a real implemented baseline service that uses separate source and channel coding for camera pose estimation.

\appendix
\subsection{Proof of Corollary \ref{corol:VIB_compute}}\label{sec:coro_proof}
\begin{proof}
We first derive the lower bound for $I(\hat{S}; Y)$. Since $\KL(p(\vy|\hat{\vs})||q_{\phi}(\vy|\hat{\vs}))\geq 0$, we have 
\begin{align*}
I(\hat{S}; Y) &  = \int p(\vy, \hat{\vs})\log \frac{p(\vy|\hat{\vs})}{p(\vy)}d\vy  d\hat{\vs} \\
 \geq  & \int p(\vy, \hat{\vs})\log q_{\phi}(\vy|\hat{\vs})d\vy  d\hat{\vs} - \int p(\vy, \hat{\vs})\log p(\vy)d\vy \, d\hat{\vs}. 
\end{align*}
To compute the first term, we use $p(\vy, \hat{\vs}) = \int p(\vy, \hat{\vs}, \vx, c) d\vx \, dc$, and with the Markov chain assumption, we further have  $p(\vy, \hat{\vs}, \vx, c) = p(\hat{\vs}|\vx, c)p(\vy|\vx, c) p(\vx)p(c) = p(\hat{\vs}|\vx, c)p(\vx, \vy, c)$. Thus, by replacing $p(\vy, \hat{\vs}) $ and using a variational approximation $p_{\theta}(\hat{\vs}|\vx, c)$,  the first term is equivalent to $\Ex_{p(\vx, \vy, \vz)}\Ex_{p_{\theta}(\hat{\vs}|\vx, c)} [\log q_{\phi}(\vy|\hat{\vs})]$. The second term is simply the entropy $H(Y)$ since $\int p(\vy|\hat{\vs})p(\hat{\vs})d\hat{\vs}=p(\vy)$. Note that since $H(Y)$ is a constant and independent of our optimization procedure, we have 
\begin{equation*}
- I(\hat{S}; Y) \leq - \Ex_{p(\vx, \vy, c)}\Ex_{p_{\theta}(\hat{\vs}|\vx, c)} \left[\log q_{\phi}(\vy|\hat{\vs})\right].
\end{equation*}
Now we derive the upper bound for $I(\hat{S}; X, C)$. Similarly, since $\KL(p(\hat{\vs})||r(\hat{\vs}))\geq 0$, and with the variational approximations $p_{\theta}(\hat{\vs}|\vx, c)$ and $r(\hat{\vs})$, we have

\begin{align*}
I(\hat{S}; X, C) & = \int p(\hat{\vs}|\vx, c) p(\vx, c)\log \frac{p(\hat{\vs}|\vx, c)}{p(\hat{\vs})}d\hat{\vs} \, d\vx \, dc \\
& \leq \int p_{\theta}(\hat{\vs}|\vx, c) p(\vx, c)\log \frac{p_{\theta}(\hat{\vs}|\vx, c)}{r(\hat{\vs})}d\hat{\vs} \, d\vx \, dc\\
& = \Ex_{p(\vx, c)} \KL\big(p_{\theta}(\hat{\vs}|\vx, c)||r(\hat{\vs})\big).
\end{align*}
Using the Markov chain assumption in Figure \ref{fig:MarkovAssumption}, we can combine the upper bound for $- I(\hat{S}; Y)$ and $I(\hat{S}; X, C)$, and derive \eqref{eqn:vib_upperbound}. 
\end{proof}

\subsection{Computation of Angular Loss}\label{sec:agular}
The angle $\theta$ between two unit quaternions $\vq$ and $\vq'$ yields $\left[\sin{(\theta/2)}\vu, \cos{(\theta/2)}\right] = \vq\vq'^{-1}$, where $\vu\in\R^3$ is the unit vector along the axis of the minimal rotation between the two quaternions, and $\vq\vq'^{-1}$ is the quaternion product of $\vq$ and the inverse of $\vq'$ (for details of quaternion algebraic properties, please refer to \cite{altmann2005rotations}).  Let $\bar{\vq}:=\vq\vq'^{-1}$, then $\sin(\theta/2)$ is equivalent to the norm of its vector part $\vv(\bar{\vq})$ and $\cos(\theta/2)$ is equivalent to the absolute scalar part $s(\bar{\vq})$. Since the angular distance is within $[-\pi, \pi]$, minimizing $\theta$ in this region is equivalent to maximizing $\cos(\theta/2)$. Thus, we minimize $-|s(\bar{\vq})|$ to minimize $\theta$. Since $s(\bar{\vq})=\vq\cdot\vq'\in\R$, the angular loss can be defined as $-|\vq\cdot\vq'|$. 

\acrodef{3GPP}{3rd generation partnership project}
\acrodef{5G}{fifth generation}
\acrodef{6G}{sixth generation}
\acrodef{AI}{artificial intelligence}
\acrodef{AR}{augmented reality}
\acrodef{AWGN}{additive white Gaussian noise}
\acrodef{BS}{base station}
\acrodef{BCQ}{batch-constrained deep Q-learning}	
\acrodef{CQI}{channel quality indicator}
\acrodef{CIO}{cell individual offset}
\acrodef{CDF}{cumulative density function}
\acrodef{CNN}{convolutional neural network}
\acrodef{CVAE}{conditional variational autoencoder}
\acrodef{CVIB}{conditional variational information bottleneck}
\acrodef{CSI}{channel state information}
\acrodef{DEEN}{deep energy estimator network}
\acrodef{DNN}{deep neural network}
\acrodef{DoF}{degree-of-freedom}
\acrodef{DDPG}{deep deterministic policy gradient}
\acrodef{DQN}{deep Q-learning}
\acrodef{DRL}{deep reinforcement learning}
\acrodef{EC}{effective communication}
\acrodef{eICIC}{enhanced inter-cell interference coordination}
\acrodef{eMBB}{enhanced mobile broadband}
\acrodef{ELBO}{evicence lower bound}
\acrodef{FLOP}{Floating Point Operations}

\acrodef{GP}{Gaussian process}

\acrodef{HO}{handover}
\acrodef{HOM}{handover margin}
\acrodef{HOL}{Too-late-handover}
\acrodef{HOE}{Too-early-handover}
\acrodef{HOW}{Wrong-cell-handover}
\acrodef{HOPP}{Ping-pong-handover}
\acrodef{HFR}{handover failure ratio}
\acrodef{HetNet}{heterogeneous network}

\acrodef{IoT}{internet of things}
\acrodef{ICI}{inter-cell interference}
\acrodef{IMU}{inertial measurement unit}
\acrodef{JSCC}{joint source and channel coding}
\acrodef{KL}{Kullback-Leibler}
\acrodef{KPI}{Key Performance Indicator}
\acrodef{LTE}{long-term evolution} 
\acrodef{LSL}{latency service level}

\acrodef{MAE}{mean absolute error}
\acrodef{MDP}{Markov decision process}
\acrodef{MRO}{mobility robustness optimization}
\acrodef{MLB}{mobility load balancing}
\acrodef{MLP}{multi-layer perception}

\acrodef{NLES}{nonlinear equation system}
\acrodef{NTSCC}{nonlinear transform source-channel coding}
\acrodef{NTC}{nonlinear transform coding}
\acrodef{OFDM}{orthogonal frequency division multiplexing}
\acrodef{ORB}{Oriented FAST and Rotated BRIEF}
	
\acrodef{PDF}{probability density function}

\acrodef{PPR}{ping-pong ratio}
    
\acrodef{QoE}{quality of experience}
\acrodef{QoS}{quality of service}

\acrodef{RAN}{radio access network}
\acrodef{RE}{resource element}
\acrodef{RRM}{radio resource management}
\acrodef{RLF}{radio link failure}
\acrodef{RSRP}{reference signal received power}
\acrodef{RANSAC}{random sample consensus}
\acrodef{RTP}{real-time transport protocol}
\acrodef{SAC}{soft actor-critic}
\acrodef{SNR}{signal-to-noise ratio}
\acrodef{SINR}{signal-to-interference-plus-noise ratio}
\acrodef{SIR}{signal-to-interference ratio}
\acrodef{SON}{self-organizing network}
\acrodef{SVM}{support vector machine}
\acrodef{SVD}{singular value decomposition}
\acrodef{SLAM}{simultaneous localization and mapping}
 \acrodef{TTT}{time-to-trigger} 
\acrodef{TD3}{twin delayed deep deterministic policy gradient}
\acrodef{TSL}{throughput service level}

\acrodef{UE}{user equipment}
\acrodef{UI}{user interface}
\acrodef{URLLC}{ultra-reliable and low-latency communication}
\acrodef{VAE}{variational autoencoder}
\acrodef{VIB}{variational information bottleneck}
\acrodef{VR}{virtual reality}
\acrodef{VO}{virtual odometry}




\bibliographystyle{IEEEtran}
\bibliography{myreferences}

\end{document}